\documentclass[lettersize,journal]{IEEEtran}
\usepackage{amsmath,amsfonts}
\usepackage{algorithmic}
\usepackage[ruled,lines numbered,vlined,boxed]{algorithm2e}
\usepackage{array}
\usepackage[caption=false,font=normalsize,labelfont=sf,textfont=sf]{subfig}
\usepackage{textcomp}
\usepackage{stfloats}
\usepackage{hyperref}
\usepackage{utfsym}
\usepackage{url}
\usepackage{pifont}
\usepackage{verbatim}
\usepackage{graphicx}
\usepackage{cite}
\newtheorem{theorem}{Theorem}[section]

\newtheorem{definition}{Definition}

\begin{document}

\title{Efficient and Privacy-Preserving Distribution Statistics Analytics on Mobile Spatial Data}

\author{Xuhao Ren, Chuan Zhang,~\IEEEmembership{Member,~IEEE}, Mingyang Zhao, Meng Li,~\IEEEmembership{Senior Member,~IEEE}, \\Liehuang Zhu,~\IEEEmembership{Senior Member,~IEEE}, Bin Xiao,~\IEEEmembership{Fellow,~IEEE}
        % <-this % stops a space
\thanks{This work is supported by the National Natural Science Foundation of China (Grant No. 62472032).}% <-this % stops a space
\thanks{Xuhao Ren, Chuan Zhang, and Liehuang Zhu are with the School of Cyberspace Science and Technology, Beijing Institute of Technology, Beijing 100081, China, and also with the Shandong Key Laboratory of Energy Industry Internet Big Data Technology, Jinan 250003, Shandong, China. E-mail:
\{xuhaor, chuanz, liehuangz\}@bit.edu.cn.}
\thanks{Mingyang Zhao and Bin Xiao are with the Department of Computing, The Hong Kong Polytechnic University, Hong Kong. Email: 25019897r@connect.polyu.hk and b.xiao@polyu.edu.hk.}
\thanks{Meng Li is with the School of Computer Science and Information Engineering, Hefei University of Technology, Hefei, Anhui 230601, China. Email: mengli@hfut.edu.cn.}
\thanks{Chuan Zhang is the corresponding author.}}

% The paper headers
\markboth{IEEE Transactions on Information Forensics and Security}%
{Shell \MakeLowercase{\textit{et al.}}: A Sample Article Using IEEEtran.cls for IEEE Journals}

% \IEEEpubid{0000--0000/00\$00.00~\copyright~2021 IEEE}
% Remember, if you use this you must call \IEEEpubidadjcol in the second
% column for its text to clear the IEEEpubid mark.

\maketitle

\begin{abstract}
With the rapid development of mobile computing technology, massive amounts of spatial data are continuously generated from various mobile terminals and sensing devices, such as smartphones, connected vehicles, and drones. Performing efficient distributed statistical analysis on this data is crucial for real-time mobile computing applications. However, the constrained and dynamic nature of mobile environments exacerbates the privacy challenge: centralizing sensitive data for analysis risks severe privacy leaks, while existing privacy-preserving techniques often introduce excessive overhead or inaccuracies.
In this paper, we design, implement, and evaluate the first system that supports efficient and privacy-preserving distribution statistics analysis for mobile spatial data. First, we propose $\mathsf{eSpat\mbox{-}B}$, which leverages two non-colluding servers and a newly designed improved distributed point functions (DPF) with octree partitioning.
Furthermore, considering the frequent updates of spatial data, we propose another more efficient scheme, $\mathsf{eSpat+}$. The core idea of this scheme is to utilize a K-Dimensional tree for spatial partitioning, combine it with incremental DPF for performing statistics analysis, and design an efficient update algorithm. Security analysis demonstrates that our schemes effectively protect data privacy throughout the statistical process. Extensive experiments on real-world trajectory datasets demonstrate that the proposed schemes significantly outperform existing approaches, reducing computation overhead by up to $1.2\times$ and communication overhead by up to $20\times$ while maintaining $100\%$ statistical accuracy.
\end{abstract}

\begin{IEEEkeywords}
Spatial distribution statistics analysis, distributed point functions, privacy-preserving
\end{IEEEkeywords}

\section{Introduction}
\IEEEPARstart{T}{he} proliferation of mobile and IoT devices has led to an unprecedented growth of spatial data—characterized by latitude, longitude, and often altitude—generated continuously from smartphones, connected vehicles, drones, and other moving entities~\cite{xue2026TNSE}. Such three-dimensional spatial data has become a fundamental resource for a broad range of applications, including intelligent transportation systems, UAV-based monitoring, environmental sensing, and urban management~\cite{zhao2026pods}. Analyzing the distribution of such data provides valuable insights into movement patterns, density hotspots, and resource utilization across geographic regions \cite{wandelt2023aerial}. The core task of spatial distribution statistics analysis can be defined as: given a stream or set of mobile spatial data points (e.g., vehicle locations and UAV trajectories) and a predefined spatial partitioning (grid or region), to efficiently and accurately compute statistical summaries—such as density, coverage, or frequency—over these partitions. For instance, a mobility service may analyze vehicle distribution to optimize fleet allocation; a city operator might assess crowd density across zones for public safety management~\cite{xue2024TMC}.

However, conducting such statistical analysis typically requires collecting data at a central server, which raises serious privacy concerns in mobile contexts \cite{zhang2024Jsac,zhang2023pota,xie2026flexible,liang2025secpq}. Mobile spatial data often contains sensitive trajectory and location information, and its exposure could lead to tracking, profiling, or other security breaches. A straightforward solution is to encrypt data before uploading, but this introduces significant computational and communication overhead. Moreover, dynamic and streaming nature of mobile data demands efficient handling of frequent updates—a challenge overlooked by many static privacy-preserving methods. Inaccurate or delayed statistics analysis in mobile scenarios can directly impact system performance and safety. For example, inefficient routing in vehicular networks may increase travel time and energy consumption, while imprecise UAV location analytics could raise collision risks or lead to violations of no-fly zones. Therefore, achieving efficient, accurate, and privacy-preserving distribution statistics analysis over mobile spatial data in practical, resource-constrained environments remains a critical challenge.

\begin{table*}[]
\centering
\caption{Comparison with related works}
\label{tab:comparison}
\renewcommand{\arraystretch}{1.3}
\setlength{\tabcolsep}{10pt}
\begin{tabular}{|c||c|c|c|c|c|}
\hline
\hline
Schemes      & Security Primitives     & 3D Statistics & Privacy & High Efficiency & Accuracy \\ \hline
PPRQ~\cite{9453150}         & Symmetric encryption    & $\usym{2717}$              & $\usym{2713}$      & $\usym{2717}$                & $\usym{2713}$       \\ \hline
PPDA~\cite{wu2023robust}         & Homomorphic encryption  & $\usym{2717}$              & $\usym{2713}$      & $\usym{2717}$                & $\usym{2713}$       \\ \hline
DDP~\cite{bell2022distributed}          & Differential privacy    & $\usym{2717}$              & $\usym{2713}$      & $\usym{2713}$              & $\usym{2717}$         \\ \hline
Prio~\cite{corrigan2017prio}         & Multi-party computation & $\usym{2717}$              & $\usym{2713}$      & $\usym{2717}$                & $\usym{2713}$       \\ \hline
Sketch~\cite{melis2015efficient}       & N/A                     & $\usym{2717}$              & $\usym{2717}$        & $\usym{2713}$              & $\usym{2717}$         \\ \hline
Standard DPF~\cite{boyle2016function} & Function secret sharing & $\usym{2717}$              & $\usym{2713}$      & $\usym{2717}$                & $\usym{2713}$       \\ \hline
\hline
Our scheme   & Function secret sharing & $\usym{2713}$            & $\usym{2713}$      & $\usym{2713}$              & $\usym{2713}$       \\ \hline
\end{tabular}
\vspace{-0.1in}
\end{table*}

% \vspace{.07in}
Some privacy-preserving techniques can be applied to spatial distribution statistics analysis, including searchable encryption (SE)~\cite{zhang2024Jsac,song2025high}, privacy-preserving range queries (PPRQ)~\cite{Tong2024TIFS,wu2025hex}, and private statistics~\cite{assouline2023weighted,tao2025differentially}. SE enables querying encrypted data while preserving data confidentiality. For example, Chamani et al.~\cite{chamani2021multi} proposed a blockchain-based multi-user searchable encryption scheme, while Le et al.~\cite{le2024muses} further concealed statistical leakage in multi-user settings. In the PPRQ domain, Tong et al.~\cite{Tong2024TIFS} leveraged distributed point functions (DPFs) and Bloom filters to support privacy-preserving range queries. However, when applied to spatial distribution statistics, these approaches typically require generating query-specific indexes and matching individual records, resulting in substantial computation and communication overhead. Private statistics techniques provide another promising direction. For instance, Boneh et al.~\cite{boneh2021lightweight} proposed a secure two-server framework for computing heavy hitters, and Mouris et al.~\cite{mouris2024plasma} further improved communication efficiency using a three-server architecture. In addition, local differential privacy (LDP)-based approaches~\cite{hong2022collecting,cunningham2021real} protect user locations through perturbation mechanisms, but inevitably introduce statistical errors due to noise injection. Overall, existing approaches either incur significant overhead or sacrifice accuracy, and most are not designed for efficient privacy-preserving distribution statistics over dynamic three-dimensional spatial data.

Based on the above requirements, we design the first system that supports efficient and privacy-preserving distribution statistics analysis for spatial data, which comprises two schemes. To be specific, we design the scheme, $\mathsf{eSpat\mbox{-}B}$, based on improved DPF, where DPF is combined with octree partitioning and regions are divided using Gray code to achieve privacy protection. To further improve efficiency, we also present an advanced scheme, $\mathsf{eSpat+}$, which is designed using K-Dimensional (KD)-tree encoding, incremental DPF, and an efficient update algorithm. Table~\ref{tab:comparison} shows the comparison of our scheme with previous works in terms of requirements. In summary, the main contributions of this paper are:
\begin{itemize}
    \item We explore and analyze the challenges and requirements for privacy-preserving spatial distribution statistics analysis, and propose a system that ensures spatial data privacy while achieving efficient and accurate statistics analysis.
    \item We propose a basic spatial data distribution statistics analysis scheme, $\mathsf{eSpat\mbox{-}B}$. Specifically, $\mathsf{eSpat\mbox{-}B}$ encodes space using Gray code and combines the DPF with octree partitioning to design an improved DPF algorithm. This scheme allows for the generation of key shares that are distributed across two non-colluding servers and ensures data privacy while enabling efficient distribution of statistics. 
    \item To further improve the performance of spatial distribution statistics analysis and update efficiency, we design an advanced scheme called $\mathsf{eSpat+}$. Specifically, we encode space using a KD-tree and leverage incremental DPF to protect privacy. In addition, considering the frequent updates of spatial data, we design an efficient update algorithm that reduces overhead by optimizing updates within the same parent region.
    \item We perform extensive security analysis to validate the robustness of our schemes and demonstrate their efficiency through extensive experiments, confirming their practical applicability and high performance.
\end{itemize}

The rest of the paper is structured as follows. Section \ref{sec:2} presents the background of our scheme. We provide the related work in Section \ref{sec:7}. In Section \ref{sec:3}, we give a problem formulation of our scheme, including the system model, threat model, definitions of our work, and design goals. We give the scheme details in Section \ref{sec:4}, followed by a security analysis in Section \ref{sec:5}. Next, performance evaluation is discussed in Section \ref{sec:6}. Finally, we conclude this paper in Section \ref{sec:8}.

\section{Related Work}\label{sec:7}
In this part, we introduce some existing technologies that can be used to achieve privacy-preserving spatial distribution statistics analysis.

\subsection{Searchable Encryption}
Searchable encryption (SE) enables efficient retrieval over encrypted data while preserving the confidentiality of both stored data and search queries~\cite{zhang2017searchable,qiu2020survey}. Since the introduction of public-key encryption with keyword search (PEKS) by Boneh et al.~\cite{boneh2004public}, numerous SE schemes have been proposed to improve search functionality and efficiency. For example, Curtmola et al.~\cite{curtmola2006searchable} introduced an inverted-index-based SE construction to accelerate query processing, while Tong et al.~\cite{tong2022privacy} combined Bloom filters and cryptographic techniques to support privacy-preserving range queries. More recently, Liang et al.~\cite{liang2023privacy} further improved retrieval efficiency through an enhanced Bloom filter design.

SE can be applied to privacy-preserving spatial statistics by encrypting spatial records and retrieving data points within a queried region before performing aggregation. However, spatial distribution statistics typically require processing large volumes of dynamically updated spatial data. Consequently, SE-based approaches must first retrieve and match individual records prior to aggregation, resulting in considerable computation and communication overhead. Such inefficiencies become particularly pronounced in large-scale three-dimensional spatial datasets with frequent location updates.

\subsection{Privacy-Preserving Range Query}
Privacy-preserving range query (PPRQ) techniques enable users to retrieve records within a query range while protecting both the query and the underlying data. Existing approaches employ various cryptographic primitives, including secure kNN, homomorphic encryption, Bloom filters, and hidden vector encryption, to support secure range queries over encrypted datasets~\cite{tong2021privacy,song2021privacy,gong2022efficient,guan2022achieving,wang2023efficient}. More recently, Tong et al.~\cite{tong2022privacy} and Miao et al.~\cite{miao2023efficient} proposed efficient range query schemes for multidimensional spatial data using symmetric encryption and homomorphic encryption techniques.

PPRQ techniques can be applied to spatial distribution statistics by retrieving all records within a target region and subsequently performing aggregation. However, such approaches fundamentally follow a ``query-then-aggregate'' paradigm, requiring individual record matching before statistical computation. As a result, they generate a large number of intermediate results and incur substantial computation and communication overhead, particularly for large-scale spatial datasets and frequent statistical queries.

\subsection{Private Data Statistics}
If the server needs to calculate a collection of all client strings, the involved parties can utilize a dual-server mixing network~\cite{chaum1981untraceable}. In this setup, each client encrypts her string with onion encryption to two servers, who then shuffle and decrypt the group of strings, respectively. 
Employing verifiable shuffles~\cite{neff2001verifiable} is essential to prevent any misconduct by the servers. An alternative approach involves employing a universally secure two-party computation~\cite{fanti2015building, lu2013distributed} for the RAM program. In this method, each client shares an extra secret component of its input string with every server. Subsequently, the servers execute a universally secure multi-party computation on the RAM program. This computation processes inputs (one string from each client) and determines the quantity of heavy hitters present. The count-min sketch~\cite{corrigan2017prio} is a data structure designed for identifying approximate heavy hitters within streaming algorithms. In a study by Melis et al.~\cite{melis2015efficient}, it was demonstrated that secure aggregation methods could enable each client to anonymously input its string into a data structure. In cases where the specific heavy hitters are not predetermined, such as in our scenario, a series of $n$-like counting data structures (with each client managing an $n$-bit string) can be employed to tally the heavy hitters. Obviously, the above private data statistics analysis can only process one-dimensional data and cannot process three-dimensional data, such as UAV locations.

\section{Background}\label{sec:2}
\subsection{Gray Code}
Gray Code \cite{bitner1976efficient,wang2020search}, also known as Reflected Binary Code, is a special form of binary encoding, characterized by the fact that only one binary bit changes between adjacent codes. An $ d$-dimensional Gray code can be represented as shown below, with $'|'$ being the concatenation operator and $G_d$ representing the vector of a Gray code instance at step $d$. For example, $G_1= (0,1)$, and $d>1$, $G_{d} = (g_1, g_2 ,\ldots, g_{2^d}),G_{d+1} = (0|g_1, 0|g_2, \ldots, 0|g_{2^d}, \ldots, 1|g_{2^d}, \ldots, 1|g_1)$. As shown in Fig.~\ref{fig: quad-gray}, this is a two-dimensional spatial code.

\textcolor{black}{In a $D \times D$ grid, the Gray code needed to represent all cells has a length of $2^{\lceil \log_2 D\rceil}$. For a spatial point $U$, we represent the Gray code of $U$ as $g_U\gets Gray(U)$. To perform statistical analysis in three-dimensional space, we adopt the encoding method outlined in this paper: $G_3=(000,001,011,010,110,111,101,100)$. In the quad-tree partition, each internal node divides the current spatial region into four child regions, and each leaf node represents a final spatial subregion; the Gray code is used to encode the root-to-leaf path, thereby assigning a unique Gray-code-based index to each leaf region.}

\begin{figure}
    \centering
    \includegraphics[width=0.48\textwidth]{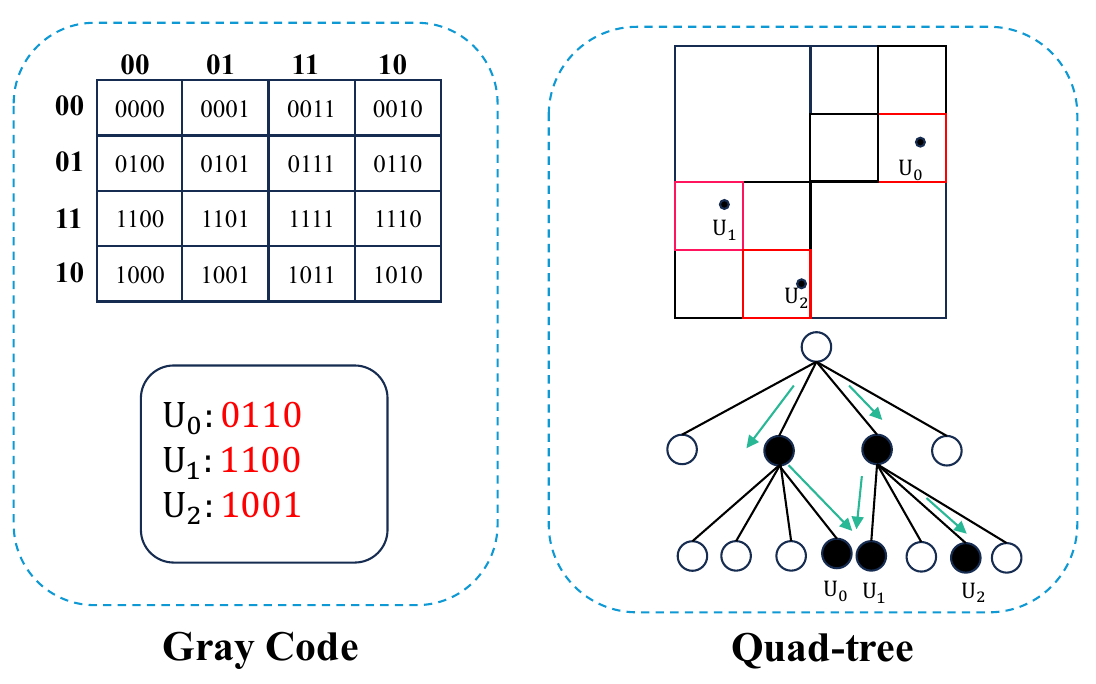}
    \caption{Example of Quad-tree and Gray code. The left part depicts a Quad-tree, a hierarchical data structure that recursively partitions a two-dimensional spatial region into four child nodes. The right part shows the Gray code assigned to the leaf cells of the Quad-tree. }
    \label{fig: quad-gray}
    \vspace{-0.15in}
\end{figure}

\subsection{Distributed Point Functions (DPF)}\label{sec:DPF}
The DPF introduced in~\cite{boyle2016function} is modeled by a function $f_{x,y}$ whose output is non-zero \emph{only} when the input matches the fixed string $x$; for every other input, the result is identically zero. Concretely, a two-party DPF over a finite field $\mathbb{F}$ is realized by the following pair of algorithms:
\begin{itemize}
    \item $\mathtt{DPF}.\mathtt{Gen}(1^{\lambda}, x, y) \rightarrow (k_0, k_1)$: on input a security parameter $\lambda$, an index string $x\in\{0,1\}^n$, and a payload value $y\in\mathbb{F}$, this routine produces two correlated keys $k_0$ and $k_1$.
    \item $\mathtt{DPF}.\mathtt{Eval}(i, k_b, x')\rightarrow y_b'$: given party identifier $b\in\{0,1\}$, a key $k_b$, and any $x'\in\{0,1\}^n$, the algorithm returns the share $y_b'$.
\end{itemize}

% The correctness requirement for the two-party DPF is expressed as follows.
% \begin{equation}\label{DPF.eq}
% \mathtt{DPF}.\mathtt{Eval}(k_0, x') + \mathtt{DPF}.\mathtt{Eval}(k_1, x') = 
% \begin{cases}
% y, & \text{if $x' = x$}\\
% 0, & \text{otherwise}
% \end{cases}.
% \end{equation}

% \begin{theorem}[Security of DPF~\cite{boyle2016function}]
%     When a party $P_0$ or $P_1$ is compromised by a PPT adversary $\mathcal{A}$, there is a PPT algorithm $SIM$ such that the outputs of the two experiments below cannot be distinguished computationally, given the leakage of the point function $f$'s input and output sizes, i.e., $\mathcal{L}_{DPF}(f) = (|x|, |y|):$
% \begin{itemize}
%     \item $REAL^{DPF}(1^\lambda)$:  Output $(DPF.Gen(1^\lambda, x, y)) \to (k_0, k_1)$;
%     \item $IDEAL^{DPF}(1^\lambda)$: Output $SIM(1^\lambda, \mathcal{L}^{DPF}(f))$.
% \end{itemize}
% \end{theorem}

% Notice that Eq.~\ref{DPF.eq} is computed in the finite field $\mathbb{F}$. The security property of the DPF suggests that an attacker who acquires either $k_0$ or $k_1$ (but not both) cannot learn any details about the point $\alpha$ or its corresponding value $\beta$.

\subsection{Incremental Distributed Point Functions}
A typical DPF~\cite{boyle2016function} effectively splits a non-zero vector of size $2^n$ at a specific point. The standard DPF divides the data into a binary tree structure where one leaf node stores a non-zero value $\beta$, with the remaining nodes holding zeros. This method entails distributing a unique value $\beta$ at each index. Conversely, the incremental DPF distributes values at every index prefix. (Refer to Fig. \ref{fig2})

Specifically, a two-party incremental DPF scheme, defined by a finite group $\mathbb{G}_1,\ldots,\mathbb{G}_n$, comprises two procedures:
\begin{itemize}
    \item $\mathtt{IDPF}.\mathtt{Gen}(1^{\lambda}, \alpha, \beta_1,\ldots,\beta_n) \rightarrow (k_0, k_1).$ Given a security parameter $\lambda$, a string $\alpha \in \{0, 1\}^n$ and values $\beta_1 \in \mathbb{G}_1,\ldots,\beta_n \in \mathbb{G}_n$, output two incremental DPF keys. 
    \item $\mathtt{IDPF}.\mathtt{Eval}(i, k_i, x) \rightarrow \mathbb{G}_l.$ For party index $b\in\{0,1\}$, an incremental DPF key $k_b$, and prefix $x\in\{0,1\}^l$, this algorithm outputs a secret-shared value located at index $x$. The computation is split into two internal routines: $\mathtt{EvalNext}$ advances the current state while producing the next share, and $\mathtt{EvalPrefix}$ delivers the corresponding share $y_b^{l}$.
\end{itemize}

\begin{figure}
    \centering
    \includegraphics[width=0.48\textwidth]{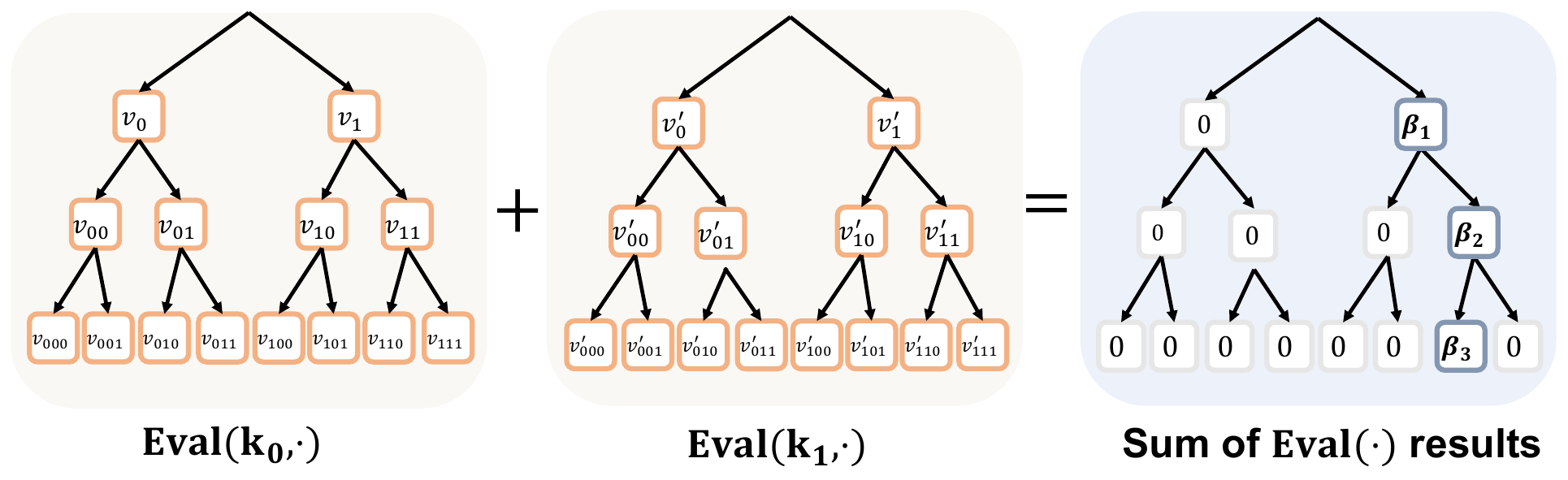}
    \caption{For instance, with a depth of $n=3$, a designate point $\alpha=110$, the values on the path are $\beta_1\in \mathbb{G}_1, \beta_2\in \mathbb{G}_2, \beta_3\in \mathbb{G}_3$ within specific finite groups $\mathbb{G}_1, \mathbb{G}_2, \mathbb{G}_3$, and the key generation is denoted as $\mathtt{Gen}(\alpha, \beta_1, \beta_2, \beta_3)\to (k_0, k_1)$.} 
    \label{fig2}
\end{figure}

For all $\alpha \in \{0,1\}^n$, output values $\beta_1, \ldots, \beta_n \in \mathbb{G}$, keys $(k_0, k_1)$, the following equation holds:
\begin{equation}
\mathtt{IDPF}.\mathtt{Eval}(k_0, x) + \mathtt{IDPF}.\mathtt{Eval}(k_1, x) = 
\begin{cases}
\beta_l, & \text{$x\in \alpha$}\\
0, & \text{otherwise}
\end{cases},
\end{equation}
where the $x$ has a length of $l$, and the operation is conducted within the finite group $\mathbb{G}_l$.

\subsection{KD-Tree}

KD-tree \cite{DBLP:journals/tvcg/ZhaoWZFXM22} is a tree structure designed for processing $k$-dimensional spatial data, where $k$ is the spatial dimension. Every node partitions the space along a specific dimension and allocates the data to its left and right child nodes. The partitioning method of the KD-tree is as follows.
\begin{itemize}
    \item During the construction process, the median of a dimension is usually selected as the partition point to ensure the balance of the tree.
    \item In each layer, the partitioning dimensions are selected in a cyclic order to ensure that the data in each dimension can be effectively partitioned.
\end{itemize}

Let us take the two-dimensional plane as an example. As shown in Fig. \ref{fig: kd-tree}, some points are randomly selected as the division criteria. The horizontal line is the x-axis. The left side of the x-axis is the point whose horizontal coordinate is less than the coordinate, and the right side is the point whose vertical coordinate is greater than the coordinate. The vertical line is the y-axis. The upper side of the y-axis is the point whose vertical coordinate is greater than the coordinate, and the lower side is the point whose vertical coordinate is less than the coordinate. 

\begin{figure}
    \centering
    \includegraphics[width=0.48\textwidth]{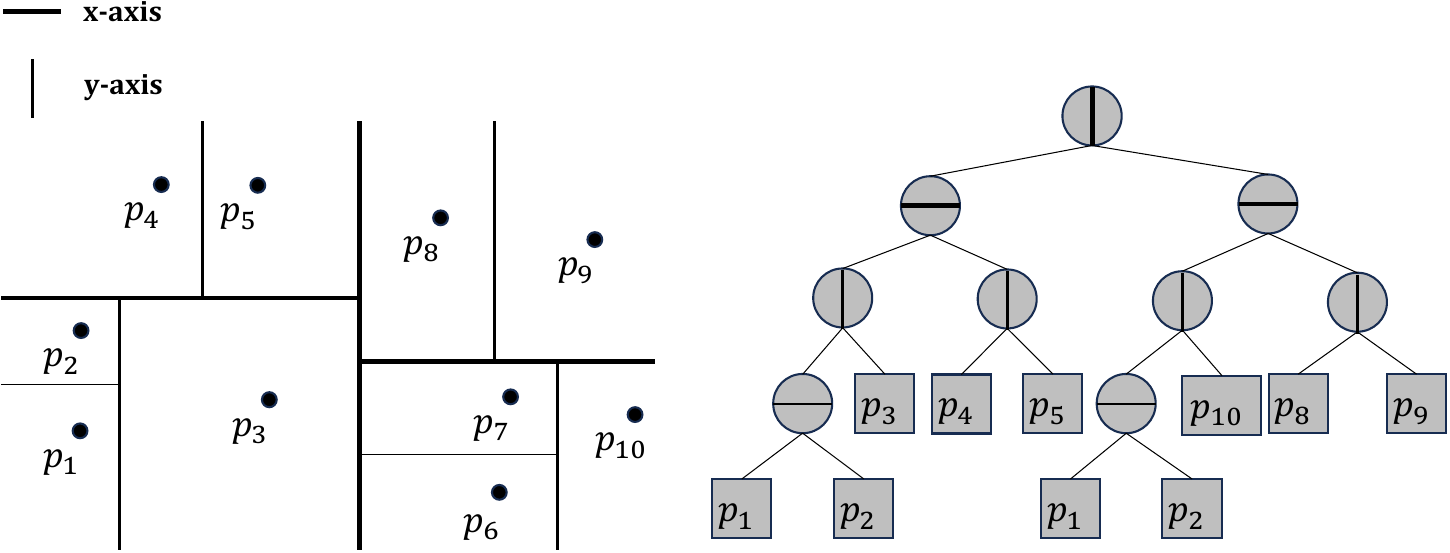}
    \caption{Example of KD-tree.}
    \label{fig: kd-tree}
    \vspace{-0.1in}
\end{figure}

\section{Problem Formulation}\label{sec:3}
In this section, we introduce the system and threat model examined in our work. It then outlines the definition of $\mathsf{eSpat}$.

\subsection{System Model and Threat Model}
\textbf{System model.} Our scheme consists of three entities: clients, cloud servers, and the requester.
\begin{itemize}
    \item \textbf{Clients.} Each client is responsible for uploading spatial data and utilizing an improved DPF to generate key shares that protect privacy. By using key sharing, the uploaded data can be securely processed by the servers without exposing the original spatial data.
    \item \textbf{Cloud servers.} The two servers receive key shares uploaded by the client and perform aggregation on the shares without accessing the spatial data, thereby ensuring privacy protection. After completing the aggregation, the servers send the results to the requester for the final synthesis of the distribution statistics.
    \item \textbf{Requester.} The requester receives the aggregated results from two servers and performs the final statistics analysis of these results to obtain the distribution of spatial data.
\end{itemize}

\textbf{Threat model.} \textcolor{black}{We assume that the client is honest, i.e., it accurately uploads real-time spatial data information as requested; the server and requester are assumed to be semi-honest. That is, although the server and requester follow the protocol to perform calculations, they may try to infer specific data from the received information. To ensure privacy, we consider that the two servers do not collude and therefore process the data independently~\cite{boneh2021lightweight, damgaard2024differentially,balle2025hash}. This assumption is commonly made in two-server privacy-preserving computation and can be instantiated by deploying the two servers under independent administrative domains.\footnote{To prevent collusion, incentive mechanisms based on game theory~\cite{dong2017betrayal} can be designed to penalize malicious cooperation and encourage honest behavior, which we leave as future work.}.}

\subsection{Definitions}
\textcolor{black}{Let $\mathcal{D}=\{p_1,p_2,\ldots,p_n\}$ denote a set of spatial data records, where each $p_i$ represents a location point in the spatial domain. Given a query region $R$, the goal of spatial distribution statistics in this paper is to compute the number of data records located within $R$. Formally, the output is a single count
\[
c_R = |\{p_i \in \mathcal{D} \mid p_i \in R\}|.
\]
Thus, in this work, the term ``distribution'' specifically refers to count-based spatial statistics over queried regions. Each query returns the number of records in the specified region, rather than a density, frequency, or a count vector over all regions. We focus on the count result because it is the fundamental statistic for spatial distribution analysis and directly reflects the number of records within the queried region. Density or frequency can be further derived from the count when the region size or total number of records is available, while returning a count vector over all regions would provide broader information than required by a single query.}

Based on the above model, we define an efficient spatial data distribution statistics analysis scheme called $\mathsf{eSpat}$. In $\mathsf{eSpat}$, we first partition the statistical space into multiple sub-areas, each assigned a unique code. Clients map their spatial data to these encoded sub-areas and generate key shares $(k_0,k_1)$. The server aggregates these shares via secure distributed computation and returns partial results to the requester, who then merges them to obtain the complete statistical distribution.

\begin{definition}[$\mathsf{eSpat}$]
    $\mathsf{eSpat}$ consists of three algorithms:
    \begin{itemize}
        \item $\mathtt{Setup()}$: This function generates a pseudo-random generator and a pseudo-random group element in $\mathbb{G}$ by transforming a random string of length $\lambda$.
        \item $\mathtt{eSpat}.\mathtt{KeyGen}(\lambda, \alpha, \beta)$ $\to$ $(k_0, k_1)$: With a security parameter denoted as $\lambda$, a string represented by $\alpha$, and values $\beta$, this algorithm provides keys $(k_0,k_1)$.
        \item $\mathtt{eSpat}.\mathtt{Eval}(b,k_b,x)\to y_b$: This algorithm takes as input server $b$, along with corresponding key $k_b$, and string $x$. It generates a value $y_b$.
    \end{itemize}
 
    Since spatial data is constantly updating, clients need to upload updated data in real-time. Generally, clients need to upload a key to offset the old data information, and then re-upload the new key for the updated data.
\end{definition}

\begin{figure*}
    \centering
    \includegraphics[width=0.85\textwidth]{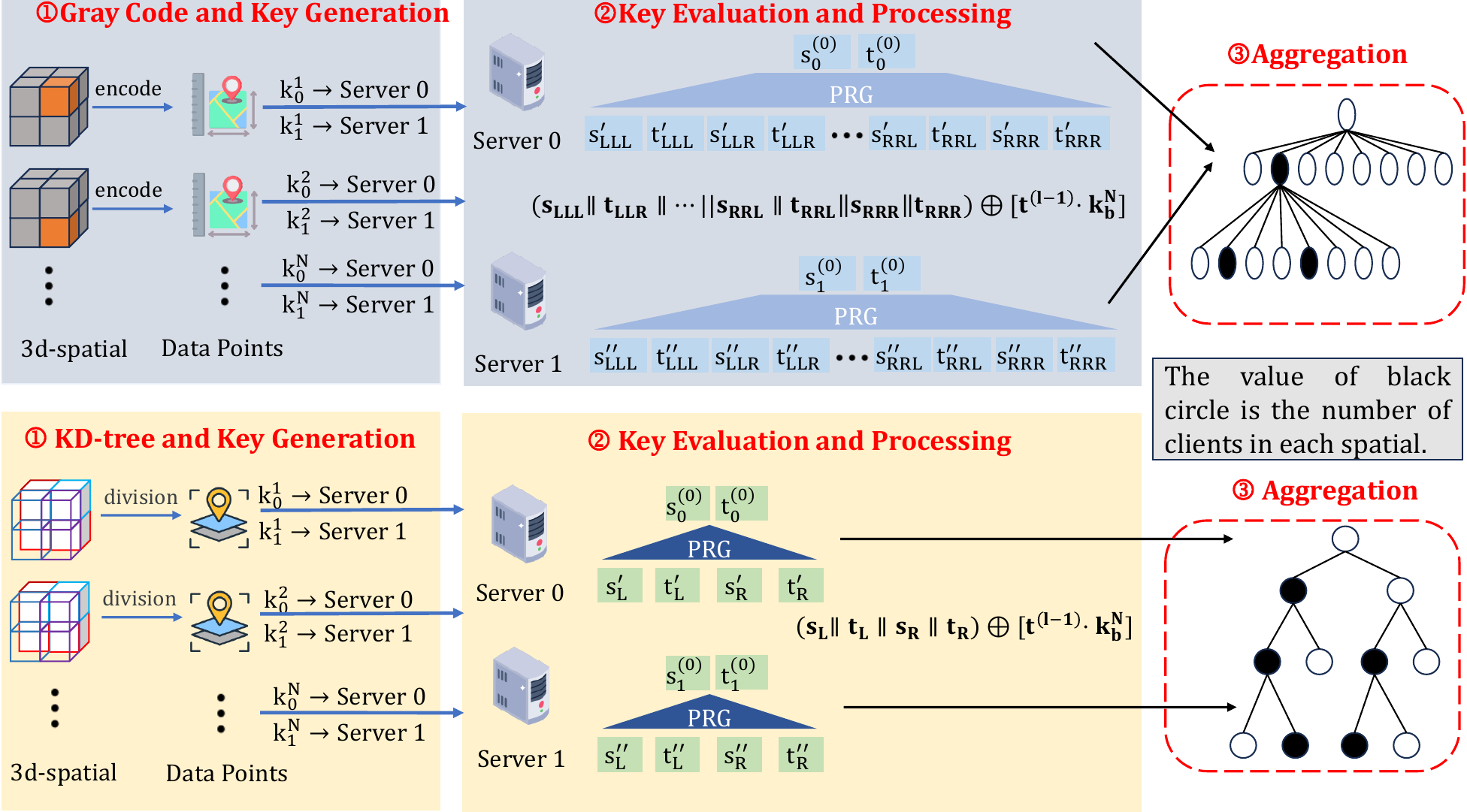}
    \caption{\textcolor{black}{Workflow of $\mathsf{eSpat}\mbox{-}B$ and $\mathsf{eSpat+}$. 
In $\mathsf{eSpat}\mbox{-}B$ shown in the upper part, step \ding{172} maps spatial data to Gray-code-based octree indices and generates DPF key shares. 
In $\mathsf{eSpat+}$ shown in the lower part, step \ding{172} performs KD-tree-based spatial division and generates prefix-oriented incremental DPF key shares. 
Steps \ding{173} and \ding{174} are similar in both schemes: the two non-colluding servers evaluate the received key shares, and the requester aggregates the returned partial results to obtain the final statistics.}}
    \label{fig: workflow}
    \vspace{-.15in}
\end{figure*}

\subsection{Design Goals}
The main design goal is to achieve efficient spatial distribution statistics analysis while protecting the privacy of spatial data. Specifically, the design goals of the proposed scheme are summarized as follows.

\begin{itemize}
    \item {\textbf{Confidentiality.}} The scheme ensures the privacy of spatial data while preventing specific information from being inferred from statistical results.
    \item {\textbf{Functionality.}} The scheme supports accurate spatial data distribution statistics analysis, allowing clients to share spatial data in a private manner.
    \item {\textbf{Efficiency.}} We design the scheme to minimize computational and communication overhead, enabling timely spatial distribution statistics analysis.
\end{itemize}

\section{Privacy-Preserving Distribution Statistics}\label{sec:4}
In this section, we first propose a scheme called $\mathsf{eSpat\mbox{-}B}$ based on the Gray code and octree. To improve performance, we design another spatial distribution statistics analysis scheme called $\mathsf{eSpat+}$ based on incremental DPF, which is introduced in detail below. The workflow is illustrated in Fig.~\ref{fig: workflow}.

\subsection{$\mathsf{eSpat\mbox{-}B}$: Basic $\mathsf{eSpat}$ Scheme}
\subsubsection{Main Idea}

\textcolor{black}{Each client uploads key shares, generated through a key generation algorithm, to two servers, which aggregate the shares for distribution statistics analysis on spatial data. If standard DPF is used to perform spatial distribution statistics, it incurs high computational overhead due to its binary tree structure. Since spatial data distribution statistics analysis requires tripling the tree height, the efficiency of DPF is heavily constrained by this factor. To overcome this limitation, we improve DPF by integrating Gray code for spatial encoding and utilizing an octree structure for indexing sub-regions, which reduces computational complexity and enhances efficiency for three-dimensional spatial distribution statistics analysis.}

\textcolor{black}{To construct our scheme, each spatial data point is encoded as $g_U \gets Gray(U)$ ($U$ is the spatial data point), and the statistical area is divided accordingly. Each parent area is divided into $8$ subareas (with $3$ bits defining a spatial area), ensuring that the parent code is a common prefix of the subarea codes. The client then generates two key shares via $\mathtt{eSpat}.\mathtt{KeyGen}$ and sends them to the servers, which use $\mathtt{eSpat}.\mathtt{Eval}$ to aggregate key shares. The requester computes the final statistical result.}

\subsubsection{Scheme Details}
$\mathsf{eSpat\mbox{-}B}$ mainly consists of three steps: key generation, data processing, and data statistics. Each client possesses a spatial data encoding $\alpha$ that is represented in Gray code. We will now elaborate on the specifics of the distribution statistics analysis process.

% \begin{figure*}
%     \centering
%     \includegraphics[width=0.8\textwidth]{Figure/Workflow of $\mathsf{eSpat\mbox{-}B}$.pdf}
%     \caption{Workflow of $\mathsf{eSpat\mbox{-}B}$.}
%     \label{fig: workflow}
% \end{figure*}

\textbf{Setup}: This step establishes a pseudo-random generator $G$ : $\{0,1\}^\lambda \rightarrow \{0,1\}^{8\lambda+8}$ and a pseudo-random group element in $\mathbb{G}$ that transforms a random string of length $\lambda$ denoted as $\text{Convert}_{\mathbb{G}^{\prime}}$: $\{0,1\}^\lambda \rightarrow \mathbb{G}^{\prime}$, where $\mathbb{G^{\prime}}:=\{0,1\}^{\lambda}\times \mathbb{G}$.

\begin{algorithm}[!ht]    \SetAlgoLined %显示end
	\caption{$\mathtt{eSpat}.\mathtt{KeyGen}$}%算法名字
        \label{DPF:Alg1}
	\KwIn{$1^{\lambda}, \alpha, \beta$}%输入参数
	\KwOut{$k_0, k_1$}%输出
    Set $\alpha = \alpha_1,\dots,\alpha_n\in \{0,1\}^n$ be the bit decomposition, where $\alpha_i \in (000,001,\ldots)$\;
	Select randomly $s_0^{(0)}\gets \{0,1\}^{\lambda}$ and $s_1^{(0)}\gets \{0,1\}^{\lambda}$\; %\;用于换行
        Let $t_0^{(0)} = 0$ and $t_1^{(0)} = 1$\;
	\For{$l = 1 $ to $n$}{
		$s_b^{\mathsf{LLL}} || t_b^{\mathsf{LLL}} || s_b^{\mathsf{LLR}} || t_b^{\mathsf{LLR}} || \ldots|| s_b^{\mathsf{RLL}} || t_b^{\mathsf{RLL}} || s_b^{\mathsf{RRR}}$ \\
  $|| t_b^{\mathsf{RRR}} \gets G(s_b^{(l-1)})$ for $b = 0,1$\;
		\If{$\alpha_l = 000$}{
			$\mathsf{Keep}\gets \mathsf{LLL}$, $\mathsf{Lose1}\gets \mathsf{LLR}$,$\ldots$,\\
                $\mathsf{Lose4}\gets \mathsf{RLL}$,$\ldots$,$\mathsf{Lose7}\gets \mathsf{RRR}$\;
		}
            \If{$\alpha_l = 001$}{
			$\mathsf{Keep}\gets \mathsf{LLR}$, $\mathsf{Lose1}\gets \mathsf{LLL}$,$\ldots$,\\
                $\mathsf{Lose4}\gets \mathsf{RLL}$, $\ldots$,$\mathsf{Lose7}\gets \mathsf{RRR}$\; 
		}
            $\ldots$\;
            \If{$\alpha_l = 101$}{
			$\mathsf{Keep}\gets \mathsf{RRL}$, $\mathsf{Lose1}\gets \mathsf{LLL}$,$\ldots$,\\ 
                $\mathsf{Lose4}\gets \mathsf{RLL}$, $\ldots$, $\mathsf{Lose7}\gets \mathsf{RRR}$\;
		}
            \If{$\alpha_l = 100$}{
			$\mathsf{Keep}\gets \mathsf{RRR}$, $\mathsf{Lose1}\gets \mathsf{LLL}$,$\ldots$,\\
                $\mathsf{Lose4}\gets \mathsf{RLL}$, $\ldots$,$\mathsf{Lose7}\gets \mathsf{RRL}$\;
		}
            $s_{CW}^{\mathsf{Keep}} \gets 0$\;
           \For{$i=1$ to $7$}{
           $s_{CW[i]}\gets s_0^{\mathsf{Lose[i]}} \oplus s_1^{\mathsf{Lose[i]}}$,\\
           $t_{CW[i]}\gets t_0^{\mathsf{Lose[i]}} \oplus t_1^{\mathsf{Lose[i]}}$,\\
           $s_{CW}^{\mathsf{Keep}}\gets s_{CW}^{\mathsf{Keep}}\oplus s_{CW[i]}$,\\
           
           }
            $t_{CW}^{\mathsf{Keep}}\gets t_0^{\mathsf{Keep}}\oplus t_1^{\mathsf{Keep}}\oplus 1$\;
            $t_b^{(l)} \leftarrow t_b^{\mathsf{Keep}} \oplus t_b^{(l-1)} \cdot t_{CW}^{\mathsf{Keep}}$ for $b =0,1$\;
            $s_b^{(l)} \leftarrow s_b^{\mathsf{Keep}} \oplus t_b^{(l-1)} \cdot s_{CW}^{\mathsf{Keep}}$ for $b =0,1$\;
            $CW^{(l)} \leftarrow s_{CW}^{\mathsf{Keep}} || t_{CW}^{\mathsf{Keep}} || s_{CW[1]} || t_{CW[1]} $ $|| s_{CW[2]} || 
            t_{CW[2]} ||\ldots||s_{CW[7]}||t_{CW[7]}$\;
	}
       $CW^{(n+1)}\gets (-1)^{t_1^n}[\beta- \text{Convert}(s_0^{(n)})+\text{Convert}(s_1^{(n)})]$\;
	Let $k_b\gets s_b^{(0)}||t_b^{(0)}||CW^{(1)}||\ldots||CW^{(n+1)}$ for $b= 0,1$\;
	\Return $(k_0, k_1)$
\end{algorithm}

\textbf{Key generation.} In this step, each client generates keys to safeguard the privacy of spatial data. Specifically, each client derives a spatial data code $\alpha$ using Gray code and generates key shares through $\mathtt{eSpat}.\mathtt{KeyGen}$. These key shares are then transmitted to the cloud servers.

$\mathtt{eSpat}.\mathtt{KeyGen}(1^{\lambda}, \alpha , \beta)\to (k_0,k_1)$: The client decomposes the data encoding $\alpha$ into a bit sequence $(\alpha_1, \alpha_2, \ldots, \alpha_n)$. For each bit $\alpha_i$, its value (e.g., $000, 001, 010$) determines a $\mathsf{Keep}$ path for the correct path and generates $\mathsf{Lose}$ paths for incorrect ones (Alg. \ref{DPF:Alg1} Lines 7-23), where $\mathsf{LLL},\mathsf{LLR},\ldots,\mathsf{RRL},\mathsf{RRR}$ denote eight directions in the octree. The client starts by generating random seeds $s_0^{(0)}$ and $s_1^{(0)}$ for two servers and sets control bits $t_0^{(0)} = 0$ and $t_1^{(0)} = 1$. During the main loop, the client processes each bit $\alpha_i$, applying XOR operations to combine selected $\mathsf{Keep}$ and $\mathsf{Lose}$ paths. This path information is accumulated in $s_{CW[i]}$ and $t_{CW[i]}$ (Alg. \ref{DPF:Alg1} Lines 25-30). Next, once the entire bit sequence is processed, the accumulated path data is used to generate final key shares $k_0$ and $k_1$, where $CW^{(l)}$ is the correction words of $l_{th}$ layer. The detailed algorithm is shown in Alg. \ref{DPF:Alg1}.

\textbf{Data processing.} The server $b$ receives key shares from each client and utilize $\mathtt{eSpat}.\mathtt{Eval}$ to aggregate them. Next, each server sends the aggregated results to the requester.

$\mathtt{eSpat}.\mathtt{Eval}(b,k_b,x)\to y_b$: This algorithm is conducted by servers and is responsible for computing and aggregating the key shares $k_b$ from each client to obtain the partial result of the statistical spatial $x$. Each server first decomposes $k_b$ into the initial seed $s^{(0)}$, control bit $t^{(0)}$, and correction words $CW^{(l)}$ for each level. Then, each server iteratively processes each bit $x_l$, selecting the $\mathsf{Keep}$ path (representing the correct path) or $\mathsf{Lose}$ paths (representing incorrect paths) based on the bit value, thereby updating the data information $\tau^{(l)}$. At each level, each server parses the path fragments $s^{\mathsf{LLL}}, s^{\mathsf{LLR}}, \dots$ according to the current bit $x_l$ and updates the values of $s^{(l)}$ and $t^{(l)}$. After completing all levels, the algorithm uses the final seed $s^{(n)}$, control bit $t^{(n)}$, and final path information $CW^{(n+1)}$ to compute the output $y_b$ (Alg.\ref{DPF:Alg2} Line 18). The resulting $y_b$ is the aggregated partial statistical result based on the key shares. Next, servers send $y_b$ to the requester. The specific algorithm is presented in Alg. \ref{DPF:Alg2}.

\begin{algorithm}[]    \SetAlgoLined %显示end
	\caption{$\mathsf{eSpat}.\mathsf{Eval}$}%算法名字
        \label{DPF:Alg2}
	\KwIn{$b,k_b,x$}%输入参数
	\KwOut{$y_b$}%输出
    Parse $k_b=s^{(0)}||t^{(0)}||CW^{(1)}||\ldots||CW^{(n+1)}$\;
	\For{$l = 1 $ to $n$}{
        Parse $CW^{(l)} = s_{CW}^{\mathsf{Keep}} || t_{CW}^{\mathsf{Keep}} || s_{CW[1]} || t_{CW[1]} $ $|| s_{CW[2]} || t_{CW[2]} ||\ldots||s_{CW[7]}||t_{CW[7]}$\;
        $\tau^{(l)}\gets G(s^{(l-1)})\oplus (t^{(l-1)}\cdot CW^{(l)})$\;
        Parse $\tau^{(l)}=s^{\mathsf{LLL}}||t^{\mathsf{LLL}}||s^{\mathsf{LLR}}||t^{\mathsf{LLR}}||\ldots||s^{\mathsf{RRL}}||t^{\mathsf{RRL}}||s^{\mathsf{RRR}}$\\$|| t^{\mathsf{RRR}}\in \{0,1\}^{8\lambda+8}$\;
        \If{$x_l=000$}{
            $s^{(l)}\gets s^{\mathsf{LLL}}$,$t^{(l)}\gets t^{\mathsf{LLL}}$\;
        }
        \If{$x_l=001$}{
            $s^{(l)}\gets s^{\mathsf{LLR}}$, $t^{(l)}\gets t^{\mathsf{LLR}}$\;
        }
        $\ldots$\;
        \If{$x_l=100$}{
            $s^{(l)}\gets s^{\mathsf{RRR}}$, $t^{(l)}\gets t^{\mathsf{RRR}}$\;
        }
 }
        $y_b=(-1)^b[\text{Convert}(s^{(n)})+t^{(n)}\cdot CW^{(n+1)}]\in \mathbb{G}$\;
	\Return $y_b$
\end{algorithm}

\textbf{Data statistics and update.} After receiving the aggregated results, the requester can obtain the distribution of spatial data as follows:
\begin{equation}
\mathsf{eSpat}.\mathsf{Eval}(k_0, x) + \mathsf{eSpat}.\mathsf{Eval}(k_1, x) = 
\begin{cases}
1, & \text{$x = \alpha$}\\
0, & \text{$x \neq \alpha$}
\end{cases}.
\end{equation}

When the spatial data changes, it updates the data by generating two key shares. First, the client generates key shares to offset the old data and sends them to the server to remove its influence from the statistics of the original data. Then, the client generates new key shares representing the updated data and sends these key shares to the server for aggregation. Specifically, the client generates the offset key share through the $\mathtt{eSpat}.\mathtt{KeyGen}(1^{\lambda},\alpha_{old},-1)$, and then calculates the new key share through $\mathtt{eSpat}.\mathtt{KeyGen}(1^{\lambda},\alpha_{new},1)$ again.

\subsection{$\mathsf{eSpat+}$: Incremental DPF-Based Scheme}
\subsubsection{Differences and Improvements}
Firstly, compared to the basic scheme, $\mathsf{eSpat+}$ encodes the statistical region using a KD-tree, a multidimensional spatial segmentation structure that divides space into sub-regions. This approach reduces redundancy in region segmentation. By improving spatial indexing efficiency, KD-tree encoding enables faster positioning and statistics processing.

Secondly, $\mathsf{eSpat+}$ utilizes incremental DPF to implement prefix statistics. While the standard DPF in the basic scheme handles precise data statistics, the incremental DPF enables prefix matching, allowing for the simultaneous processing of multiple sub-regions. This enhancement facilitates the direct calculation of statistical results for multiple areas during range statistics, significantly boosting efficiency.

Finally, $\mathsf{eSpat+}$ incorporates an efficient update mechanism to address the frequent position changes of spatial data. Unlike the basic scheme, which requires a complete update of key shares for every position change, $\mathsf{eSpat+}$ updates only the key share for the affected area once. Since spatial data is often confined to small shifts between sub-areas, focusing on these movements during updates improves efficiency by approximately $50\%$, making the system more suitable for real-time spatial applications.

\textcolor{black}{Compared with $\mathsf{eSpat\mbox{-}B}$, $\mathsf{eSpat+}$ improves update efficiency by exploiting prefix sharing in the KD-tree partition. In $\mathsf{eSpat\mbox{-}B}$, each updated spatial data is encoded as a complete spatial index, and the corresponding DPF keys need to be regenerated and evaluated for that complete index. Therefore, updates are handled at the granularity of individual encoded points or regions. In contrast, $\mathsf{eSpat+}$ organizes the spatial domain as a KD-tree, where each node represents a spatial subregion and the path from the root to the node forms a prefix. The incremental DPF operates on these prefixes. When a spatial point is inserted, deleted, or modified, only the prefixes associated with the affected KD-tree subregions need to be updated. Unaffected subregions keep their previous states and do not need to be recomputed.}

\textcolor{black}{Moreover, if multiple updated records fall under the same parent region, they share the same prefix in the KD-tree. In this case, $\mathsf{eSpat+}$ can aggregate these updates at the shared prefix level, avoiding repeated DPF generation and evaluation for each individual record. This prefix-based update mechanism reduces both computation and communication overhead, which explains the better update efficiency of $\mathsf{eSpat+}$.}

\subsubsection{Detailed Scheme Construction} 

\textcolor{black}{Similar to $\mathsf{eSpat}\mbox{-}B$, $\mathsf{eSpat+}$ also mainly includes key generation, data processing, and statistics. In the construction of the $\mathsf{eSpat+}$ solution, the core is to adopt a static and publicly available KD-tree as the spatial index structure. The dividing points of all internal nodes of this tree are predetermined and made public during the system initialization. These dividing points are selected based on the median to ensure the generation of a balanced depth and evenly partitioned index tree. The specific choice of split points within the KD-tree does not affect the cryptographic correctness or privacy guarantees of the computation. Instead, these points define the spatial granularity and boundaries of the statistical query itself, which are predetermined by the system or specified by the requester based on the analysis needs (e.g., dividing a city into administrative districts or uniform grids). Each internal node splits a spatial region along one dimension, and each branch corresponds to one partitioning decision. Therefore, the path from the root to any node defines a prefix, where each bit records the left/right branching decision at one KD-tree split. Each prefix represents the spatial subregion associated with that node. Incremental DPF uses these prefixes to update only the affected subregions, thereby avoiding recomputation over the entire spatial domain.}

% \textcolor{black}{In $\mathsf{eSpat+}$, the spatial domain is recursively partitioned using a KD-tree. Each internal node splits a spatial region along one dimension, and each branch corresponds to one partitioning decision. Therefore, the path from the root to any node defines a prefix, where each bit records the left/right branching decision at one KD-tree split. Each prefix represents the spatial subregion associated with that node. Incremental DPF uses these prefixes to update only the affected subregions, thereby avoiding recomputation over the entire spatial domain.}

\textbf{Setup}. In Alg. \ref{Alg:kd-tree}, a 3-dimensional tree is constructed by recursively dividing the point set to create a hierarchical structure. The algorithm begins by determining the splitting axis based on the current depth, cycling through the $x$, $y$, and $z$ axes. It then sorts the points along the current axis and selects the median point as the node. The point set is split into left and right subsets, and the algorithm recursively builds the left and right subtrees, incrementing the depth at each level to switch the splitting axis. Recursion stops when no points remain, resulting in a balanced 3D tree. Additionally, similar to the \textbf{Setup} in $\mathsf{eSpat\mbox{-}B}$, a pseudo-random generator $G$ is established: $G: \{0,1\}^\lambda \to \{0,1\}^{2\lambda+2}$, and $\text{Convert}_{\mathbb{G}^{\prime}} \to \mathbb{G}^{\prime}$.

\begin{algorithm}[]
\caption{3D-Tree Construction}
\label{Alg:kd-tree}
\KwIn{A set of points in 3D space, $depth = 0$}
\KwOut{Root of the constructed 3D KD-Tree}
\SetKwFunction{FMain}{BuildKDTree}
\SetKwProg{Fn}{Function}{:}{}
\Fn{\FMain{points, depth}}{
    \If{points is empty}{
        \Return NULL\;
    }
    % Determine the axis based on depth
    $axis = depth \mod 3$\;
    % Sort points by the current axis and select the median as pivot
    Sort points by the current axis\;
    $median = \text{len(points)} // 2$\;
    % Create node and construct subtrees
    $node = new Node(points[median])$\;
    $node.left = \FMain{points[:median], depth + 1}$\;
    $node.right = \FMain{points[median + 1:], depth + 1}$\;
    \Return node\;
}
\end{algorithm}

\textbf{Key generation.} In Alg. \ref{IDPF:Alg1}, the client initializes random seeds $s_0^{(0)}$ and $s_1^{(0)}$, sets control bits $t_0^{(0)}$ and $t_1^{(0)}$, and defines the initial depth as $0$. For each level $l$, the client determines the splitting axis (cycling through $x$, $y$, and $z$ axes), selects a $\mathsf{Keep}$ path based on its coordinate, and accumulates the path information for both $\mathsf{Keep}$ and $\mathsf{Lose}$ paths. Similar to $\mathsf{eSpat\mbox{-}B}$, this process updates strings $s_b^{(l)}$ and $t_b^{(l)}$. After all levels are processed, the accumulated information is used to generate final key shares.

\begin{algorithm}[!ht]    \SetAlgoLined %显示end
	\caption{$\mathtt{eSpat+}.\mathtt{KeyGen}$}%算法名字
        \label{IDPF:Alg1}
	\KwIn{$1^{\lambda}, ((x_0,y_0,z_0), (\mathbb{G}_1, \beta_1),\ldots, (\mathbb{G}_n,\beta_n))$}%输入参数
	\KwOut{$k_0, k_1, pp$}%输出
	 $s_0^{(0)}\gets \{0,1\}^{\lambda}$ and $s_1^{(0)}\gets \{0,1\}^{\lambda}$\; %\;用于换行
        $t_0^{(0)} = 0$ and $t_1^{(0)} = 1$, $depth=0$\;
	\For{$l = 1 $ to $n$}{
		$s_b^{\mathsf{L}} || t_b^{\mathsf{L}} || s_b^{\mathsf{R}} || t_b^{\mathsf{R}} \gets G(s_b^{(l-1)})$ for $b = 0,1$\;
		$axis = depth \mod 3$\;  
    \If{$axis = 0$}{ % Splitting on x-axis
    \eIf{$x_0 < \text{node.x}$}{
        $\mathsf{Keep} \leftarrow Left$,
        $\mathsf{Lose} \leftarrow Right$\;
    }{
        $\mathsf{Keep} \leftarrow Right$,
        $\mathsf{Lose} \leftarrow Left$\;
    }
}
\ElseIf{$axis = 1$}{ % Splitting on y-axis
    \eIf{$y_0 < \text{node.y}$}{
        $\mathsf{Keep} \leftarrow Left$,
        $\mathsf{Lose} \leftarrow Right$\;
    }{
        $\mathsf{Keep} \leftarrow Right$,
        $\mathsf{Lose} \leftarrow Left$\;
    }
}
\Else{ % Splitting on z-axis
    \eIf{$z_0 < \text{node.z}$}{
        $\mathsf{Keep} \leftarrow Left$,
        $\mathsf{Lose} \leftarrow Right$\;
    }{
        $\mathsf{Keep} \leftarrow Right$,
        $\mathsf{Lose} \leftarrow Left$\;
    }
}
            $s_{CW} \leftarrow s_0^{\mathsf{Lose}} \oplus s_1^{\mathsf{Lose}}$\;
            $t_{CW}^\mathsf{L} \leftarrow t_0^{\mathsf{L}} \oplus t_1^{\mathsf{L}}\oplus \alpha_l \oplus 1, t_{CW}^\mathsf{R}\leftarrow t_0^{\mathsf{R}} \oplus t_1^{\mathsf{R}}$\;
            $t_b^{(l)} \leftarrow t_b^{\mathsf{Keep}} \oplus t_b^{(l-1)} \cdot t_{CW}^{\mathsf{Keep}}$ for $b =0,1$\;
            $\tilde{s}_b^{(l)} \leftarrow s_b^{\mathsf{Keep}} \oplus t_b^{(l-1)} \cdot s_{CW}$ for $b =0,1$\;
            $s_b^{(l)} || W_b^{(l)} \leftarrow \text{Convert}_{\mathbb{G}_{l}}(\tilde{s}_b^{(l)})$ for $b = 0,1$\;
            $W_{CW}^{(l)} \leftarrow (-1)^{t_1^{(l)}} \cdot [\beta_{l} - W_{0}^{(l)} + W_{1}^{(l)}]$\;
            $CW^{(l)} \leftarrow s_{CW} || t_{CW}^{\mathsf{L}} || t_{CW}^\mathsf{R} || W_{CW}^{(l)}$\;
            $depth++$\;
	}
        $s_0^{\prime} = s_0^{(n)}$, $s_1^{\prime} = s_1^{(n)}$, $t_0^{\prime} = t_0^{(n)}$, $t_1^{\prime} = t_1^{(n)}$\;
	Let $k_b\gets s_b^{(0)}$ for $b= 0,1$\;
        Let $pp \gets CW^{(1)},\ldots, CW^{(n)}$\;
	\Return $(k_0, k_1 ,pp)$, $(s_0^{\prime},s_1^{\prime},t_0^{\prime},t_1^{\prime})$
\end{algorithm}

Then, the client generates new key shares when spatial data is updated. It initializes random seeds $s_0^{(0)}$ and $s_1^{(0)}$, and control bits $t_0^{(0)} = 0$ and $t_1^{(0)} = 1$. The client calls $\mathtt{eSpat+}.\mathtt{KeyGen}$ three times: first for the public region $R(x_0, y_0, z_0)$, then for the old range $R(x', y', z')$ to remove its influence, and finally for the updated range $R(x'', y'', z'')$ to include the new data. The client then combines the key shares and correction words into $k_0$, $k_1$, and $pp$, $pp_{old}$, and $pp_{new}$. Details refer to Alg. \ref{IDPF:Alg2}. 

\textbf{Data processing.} Each server interprets the input key share and the correction word set $pp$ to statistically analyze the distribution of the range $R(x), R(y), R(z)$, where $R(\cdot)$ denotes $\cdot>node.value$. The server initializes the depth and processes the seed and $CW^{(l)}$. It then selects the axis based on the current depth, and compares the node's coordinates with the specified range. Depending on whether the node coordinates fall within the range, the server chooses the left ($\mathsf{L}$) or right ($\mathsf{R}$) path, updating the state with $s^\mathsf{L}$, $t^\mathsf{L}$ or $s^\mathsf{R}$, $t^\mathsf{R}$. After processing all levels, the server returns the final result $y_b^l$ and $st^l$. The detailed algorithm is shown in Alg. \ref{IDPF:Alg3}.

\begin{algorithm}[]
    \SetAlgoLined %显示end
	\caption{$\mathtt{eSpat+}.\mathtt{MoveGen}$}%算法名字
        \label{IDPF:Alg2}
	\KwIn{$1^\lambda$,(($R(x_0,y_0,z_0)$,($\mathbb{G}_1$,$\beta_1$),$\dots$,($\mathbb{G}_m$,$\beta_m$)),
        \\ ($R(x',y',z')$,($\mathbb{G}_{m+1}$,$\beta_{m+1}$),$\dots$,($\mathbb{G}_{n}$,$\beta_{n}$)),
        \\ ($R(x'',y'',z'')$,($\mathbb{G}_{m+1}$,$\beta_{m+1}'$),$\dots$,($\mathbb{G}_{n}$,$\beta_{n}'$)))}%输入参数
	\KwOut{$k_0,k_1,pp,pp_{old},pp_{new}$}%输出
       
     $s_0^{(0)}\stackrel{R}{\longleftarrow}\{0,1\}^\lambda$ and $s_1^{(0)}\stackrel{R}{\longleftarrow}\{0,1\}^\lambda$\;
	Let $t_0^{(0)}\leftarrow 0$ and $t_1^{(0)}\leftarrow 1$\;
        $(s_0,s_1,t_0,t_1,pp) \leftarrow$ $\mathtt{eSpat+}.\mathtt{KeyGen}$
        $(1^\lambda,(R(x_0,y_0,z_0),(\mathbb{G}_1,\beta_1),\ldots,(\mathbb{G}_m,\beta_m)))$\;
	$(s_0,s_1,t_0,t_1,pp_{old})        
        \leftarrow$ $\mathtt{eSpat+}.\mathtt{KeyGen}$ 
        $(1^\lambda,(R(x',y',z'),(\mathbb{G}_{m+1},\beta_{m+1}),\dots,(\mathbb{G}_{n},\beta_{n})));$     
        $(s_0,s_1,t_0,t_1,pp_{new}) \leftarrow$ $\mathtt{eSpat+}.\mathtt{KeyGen}$ $(1^\lambda, (R(x'',y'',z''),(\mathbb{G}_{m+1},\beta_{m+1}'),\dots,(\mathbb{G}_{n},\beta_{n}')))$\;
        Let $k_b \leftarrow s_b^{(0)}$ for $b = 0,1$\;
	\Return ($k_0,k_1,pp,pp_{old},pp_{new}$)
\end{algorithm}

Next, each server performs prefix statistics by calling $\mathtt{eSpat+}. \mathtt{EvalNext}$ at each level, using the previous key share and the current $CW^{(j)}$ along with the range. This produces updated results $(st_b^{j}, y_b^{j})$ at each level (Alg. \ref{IDPF:Alg4} Line 5). After iterating through all levels, the server returns the accumulated result $y_b^l$, which represents the partial statistical result.

In Alg. \ref{IDPF:Alg5}, when the spatial data is updated, the server evaluates the new data and ensures the correctness of the aggregation results. The server initializes $s^{(0)} = k_b$ and $t^{(0)} = b$, and interprets the common correction words set $pp$, along with old and new  information $pp_{old}$ and $pp_{new}$. The server first processes the common data by calling $\mathtt{eSpat+}.\mathtt{EvalNext}$ for each layer, then separately evaluates the old and new data, adjusting the result by subtracting the impact of old data. The final updated result $y_{b, new}$ is returned.

\textbf{Data statistics.} Similar to $\mathsf{eSpat\mbox{-}B}$, the requester can obtain the spatial distribution of spatial data by directly adding the partial results $y_b^l$ from the two servers.

\begin{algorithm}[]
    \SetAlgoLined %显示end
	\caption{$\mathtt{eSpat+}.\mathtt{EvalNext}$}%算法名字
        \label{IDPF:Alg3}
	\KwIn{$b,st_b^{l-1},pp_{l},R(x),R(y),R(z)$}%输入参数
	\KwOut{$st^{l},y_b^{l}$}%输出
	Interpret $st_b^{l-1}$ = $s^{(l-1)} || t^{(l-1)}$\; %\;用于换行
        % $depth = 0$\;
        Interpret $G(s^{(l-1)}) = \hat{s}^{\mathsf{L}} || \hat{t}^{\mathsf{L}} || \hat{s}^{\mathsf{R}} || \hat{t}^{\mathsf{R}}$\;
        Interpret $CW^{(l)}=s_{CW} || t^{\mathsf{L}}_{CW} || t^{\mathsf{R}}_{CW} || W_{CW}$;\\
        $\tau^{l}\leftarrow$ $(\hat{s}^{\mathsf{L}}||\hat{t}^{\mathsf{L}}||\hat{s}^{\mathsf{R}}||\hat{t}^{\mathsf{R}})\oplus(t^{(l-1)}\cdot [s_{CW}||t^{\mathsf{L}}_{CW}||s_{CW}||t^{\mathsf{R}}_{CW}])$\;
        Interpret $\tau^{l} =s^{\mathsf{L}} || t^{\mathsf{L}} || s^{\mathsf{R}} || {t}^{\mathsf{R}} \in \{0,1\}^{2\lambda+2}$\;
	$axis = l \mod 3$\;  
    \If{$axis = 0$}{ % Splitting on x-axis
    \eIf{$R(x) == (x < node.x)$ }{
        $\tilde{s}^{(l)}\gets s^\mathsf{L}, t^{(l)}\gets t^\mathsf{L}$\;
    }{
        $\tilde{s}^{(l)}\gets s^\mathsf{R}, t^{(l)}\gets t^\mathsf{R}$\;
    }}
    \ElseIf{$axis = 1$}{ % Splitting on y-axis
        \eIf{$R(y) == (y < node.y) $}{
        $\tilde{s}^{(l)}\gets s^\mathsf{L}, t^{(l)}\gets t^\mathsf{L}$\;
    }{
        $\tilde{s}^{(l)}\gets s^\mathsf{R}, t^{(l)}\gets t^\mathsf{R}$\;
    }
}
    \Else{ % Splitting on z-axis
        \eIf{$R(z) == (z < node.z) $}{
        $\tilde{s}^{(l)}\gets s^\mathsf{L}, t^{(l)}\gets t^\mathsf{L}$\;
    }{
        $\tilde{s}^{(l)}\gets s^\mathsf{R}, t^{(l)}\gets t^\mathsf{R}$\;
    }
}
       $s^{(l)} || W^{(l)} \leftarrow Convert_{\mathbb{G}_{l}}(\tilde{s}^{(l)})$\;
       $st^{l} \leftarrow s^{(l)} || t^{(l)}$\;
       $y_b^{l} \leftarrow (-1)^b \cdot [W^{(l)} + t^{(l)} \cdot W_{CW}]$\;
       \Return ($st^{l},y_b^{l}$)
\end{algorithm}

\section{Theoretical Analysis}\label{sec:5}
In this section, we theoretically analyze the security and complexity of $\mathsf{eSpat\mbox{-}B}$ and $\mathsf{eSpat+}$.

\subsection{Security Analysis}
In this part, we analyze the security of $\mathsf{eSpat\mbox{-}B}$ and $\mathsf{eSpat+}$. Then, we prove that both schemes can protect data privacy.

\begin{theorem}\label{the2}
Under the semi-honest and non-colluding two-server model, $\mathsf{eSpat\mbox{-}B}$ and $\mathsf{eSpat+}$ protect the privacy of users' spatial data, assuming the key indistinguishability of the underlying DPF and IDPF primitives.
\end{theorem}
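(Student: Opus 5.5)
We prove the statement by a standard simulation argument, treating the two servers and the requester separately. The leakage function is fixed first. To a single server $b\in\{0,1\}$, the scheme leaks only the public parameters: the number of uploaded keys, the tree depth $n$, the output group $\mathbb{G}$ (or $\mathbb{G}_1,\ldots,\mathbb{G}_n$), and the public octree/KD-tree layout fixed at \textbf{Setup}. To the requester, it leaks only the final aggregate counts $c_R$. We then build a PPT simulator $SIM_b$ for each server and show that $\mathsf{REAL}_b$, the view of server $b$ consisting of its keys $k_b$ and, for $\mathsf{eSpat+}$, the public correction words $pp,pp_{old},pp_{new}$, is computationally indistinguishable from $\mathsf{IDEAL}_b = SIM_b(1^\lambda,\mathcal{L})$. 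Since the servers are semi-honest and non-colluding, it suffices to handle each server's view separately.

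For $\mathsf{eSpat\mbox{-}B}$, we argue that Alg.~\ref{DPF:Alg1} is the DPF of~\cite{boyle2016function} with branching factor $8$ instead of $2$. The argument goes level by level.
\begin{itemize}
\item At each level, the party holds $s_b^{(l-1)}$ but not $s_{1-b}^{(l-1)}$.
\item The PRG output $G(s_{1-b}^{(l-1)})$ is pseudorandom, so every $s_{CW[i]}$, $t_{CW[i]}$, $s_{CW}^{\mathsf{Keep}}$ and $t_{CW}^{\mathsf{Keep}}$ is masked by fresh pseudorandom bits. We must check that $s_{CW}^{\mathsf{Keep}}$, being an XOR of the seven lose-words, adds no new correlation; it is a deterministic function of those words, which are already pseudorandom.
\item Through a hybrid over $l=1,\ldots,n$, we replace $CW^{(l)}$ by uniform strings.
\item The last word $CW^{(n+1)}$ is masked by $\mathrm{Convert}(s_{1-b}^{(n)})$, which becomes uniform in $\mathbb{G}$.
\end{itemize}
The simulator therefore outputs a random seed, $t^{(0)}=b$, and uniform correction words of the correct lengths. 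By the assumed key indistinguishability (or directly by PRG security plus a hybrid over clients), the view is independent of $\alpha$ and $\beta$. In particular, update keys generated with $\beta=-1$ and $\beta=1$ are indistinguishable from each other and from fresh insertions. Hence a server also cannot link old and new positions.

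For $\mathsf{eSpat+}$, we invoke IDPF key indistinguishability~\cite{boneh2021lightweight} for each invocation of Alg.~\ref{IDPF:Alg1} inside Alg.~\ref{IDPF:Alg2}. The step we expect to be hardest is that $pp$ is handed to \emph{both} servers and that the three KeyGen calls in $\mathtt{MoveGen}$ share state. We must argue that a server seeing $k_b=s_b^{(0)}$ together with all public correction words still learns nothing, because each $s_{CW}$, $t_{CW}^{\mathsf{L}}$, $t_{CW}^{\mathsf{R}}$ is masked by the other party's PRG output and each $W_{CW}^{(l)}$ is masked by $W_{1-b}^{(l)}$. This is the IDPF hybrid over prefix levels. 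The plan is to treat the concatenation $pp\|pp_{old}\|pp_{new}$ as a single IDPF key along a tree whose first $m$ levels are shared and whose last $n-m$ levels fork, and to extend the hybrid to this branching structure.

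Finally, the requester's view is $y_0+y_1$ summed over clients. Each $y_b$ individually is masked by a pseudorandom $\mathrm{Convert}$ output, so the pair $(y_0,y_1)$ can be simulated from $c_R$ by sampling $y_0$ uniformly and setting $y_1=c_R-y_0$. By the hybrid argument, the total distinguishing advantage is bounded by a polynomial number of PRG/DPF advantages, which is negligible. This gives the statement of Theorem~\ref{the2}.
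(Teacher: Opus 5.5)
The step you flag as hardest, the fork in $\mathtt{MoveGen}$, does not go through, and refining the hybrid cannot fix it: the fork itself leaks. $\mathtt{MoveEval}$ starts both branches from the same state $st_b^m$. Correctness therefore forces the level-$(m+1)$ correction words of $pp_{old}$ and $pp_{new}$ to be computed from the \emph{same} expansions $G(s_b^{(m)})=s_b^{\mathsf{L}}\|t_b^{\mathsf{L}}\|s_b^{\mathsf{R}}\|t_b^{\mathsf{R}}$, $b=0,1$. The masks are then cancelled or reused. First, $t^{\mathsf{L}}_{CW,old}\oplus t^{\mathsf{L}}_{CW,new}=\alpha^{old}_{m+1}\oplus\alpha^{new}_{m+1}$, where these are the branch bits. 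Second, if the branches split at this level (say old keeps $\mathsf{L}$ and new keeps $\mathsf{R}$), then
\[
s_{CW,old}=s_0^{\mathsf{R}}\oplus s_1^{\mathsf{R}},\qquad s_{CW,new}=s_0^{\mathsf{L}}\oplus s_1^{\mathsf{L}} .
\]
Server $b$ computes $s_b^{\mathsf{L}},s_b^{\mathsf{R}}$ itself in $\mathtt{EvalNext}$, so it recovers $s_{1-b}^{\mathsf{L}}$ and $s_{1-b}^{\mathsf{R}}$. Using $t_{1-b}^{(m)}=1\oplus t_b^{(m)}$ on the path, it obtains the other server's state $\tilde s_{1-b}^{(m+1)}$ on both branches. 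It can then compute $y_0+y_1$ at every node below the fork, which recovers the old and new cells.

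In the indistinguishability game the adversary knows the candidate prefixes. It can evaluate along one of them and test whether $W^{(m+1)}_{CW,old}=(-1)^{t_1^{(m+1)}}[\beta_{m+1}-W_0^{(m+1)}+W_1^{(m+1)}]$, which distinguishes with overwhelming advantage. If the branches agree at level $m+1$, the same attack works at the first level where they split. So your claim that each $s_{CW}$ is masked by the other party's PRG output fails exactly at the fork: the mask of one branch's $s_{CW}$ is the other branch's secret $\mathsf{Keep}$ seed.

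As a result, the statement can only be proved for plain $\mathtt{KeyGen}$ keys, or after changing $\mathtt{MoveGen}$ so the two branches below the fork use independent expansions (for example, domain-separated seeds derived from $s_b^{(m)}$, or fully independent keys). In either case your leakage function must also include $m$, since $|pp|=m$ tells a server the depth at which the old and new cells separate.

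The paper's proof never confronts this. It reduces each server's view to black-box DPF/IDPF key indistinguishability, by embedding a challenge key and running a hybrid over users, and so tacitly treats $pp\|pp_{old}\|pp_{new}$ as an ordinary IDPF key, which it is not. Your white-box route for $\mathsf{eSpat\mbox{-}B}$, which derives security of the 8-ary DPF from PRG security, and your requester simulation are sound and go beyond the paper, with one fix. The simulator cannot output fully uniform $CW^{(l)}$, because in a real key the slot $s_{CW}^{\mathsf{Keep}}$ always equals $\bigoplus_{i=1}^{7}s_{CW[i]}$. Instead, sample the seven lose-words uniformly and set that slot to their XOR.
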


\begin{IEEEproof}
Let $\mathsf{View}*{S_b}$ denote the view of server $S_b$, where $b\in{0,1}$. In each execution, $\mathsf{View}*{S_b}$ consists of the public parameters, the key shares received from users, and the local values computed by $S_b$. Since the two servers are non-colluding, each server observes only one share of each DPF/IDPF key.

For $\mathsf{eSpat\mbox{-}B}$, each spatial data is encoded as a spatial index, and the client generates two DPF keys corresponding to this index. By the key indistinguishability of DPF, the key share held by any single server is computationally indistinguishable from a key share generated for any other spatial index in the same domain. Hence, for any two spatial data, the view of a single server is computationally indistinguishable, except for public parameters and unavoidable aggregate outputs.

For $\mathsf{eSpat+}$, the spatial domain is represented by KD-tree prefixes, and users generate IDPF keys for the corresponding prefixes. By the key indistinguishability of IDPF, a single server cannot distinguish key shares generated for different prefixes. Therefore, the server cannot infer the corresponding KD-tree subregion or the user's spatial data from its own view.

More formally, if there exists a probabilistic polynomial-time adversary that distinguishes the server view generated from two different spatial data with non-negligible advantage, then one can construct a distinguisher against the underlying DPF/IDPF primitive by embedding the challenge key share into the simulated server view. This contradicts the key indistinguishability of DPF/IDPF. By a standard hybrid argument over all user-generated keys, replacing real keys with keys for alternative records remains computationally indistinguishable to any single server. Therefore, under the semi-honest and non-colluding assumption, $\mathsf{eSpat\mbox{-}B}$ and $\mathsf{eSpat+}$ preserve the privacy of users' spatial data.
\end{IEEEproof}

\subsection{Complexity Analysis}

% \begin{table}[ht]
% \caption{Complexity between eSpat and other DPF schemes}
% \begin{tabular}{ll|llll}
% \hline
% \multicolumn{2}{l|}{}                                                          & \cite{boyle2016function}& \cite{boneh2021lightweight}& $\mathsf{eSpat\mbox{-}B}$ & $\mathsf{eSpat+}$ \\ \hline
% \multicolumn{1}{l|}{\multirow{2}{*}{Key size}}   & \multicolumn{1}{l|}{Any $n$} & $\approx n^2\lambda/2$      & $\approx 3n\lambda$             & $\approx 3n\lambda$  & $\approx n\lambda$    \\ \cline{2-6} 
% \multicolumn{1}{l|}{}                            & \multicolumn{1}{l|}{$n=256$} & $543$ KB        & $19.5$ KB           & $19.5$ KB   & $6.2$ KB  \\ \hline
% \multicolumn{1}{l|}{\multirow{2}{*}{Eval times}} & \multicolumn{1}{l|}{Any $n$} & $\approx 4n^2$        & $\approx 4n$             & $\approx 2n$  & $\approx n/2$    \\ \cline{2-6} 
% \multicolumn{1}{l|}{}                            & \multicolumn{1}{l|}{$n=256$} & $262142$       & $1026$            & $512$     & $127$   \\
% \hline

% \end{tabular}
% \end{table}

As shown in Table~\ref{tab:complex}, we compare the computational and communication complexities of $\mathsf{eSpat}$ with representative schemes for privacy-preserving spatial statistics analysis. Existing approaches such as LiangTC~\cite{liang2023privacy} and LiangTIFS~\cite{liang2024efficient} rely on expensive cryptographic primitives, including modular exponentiation, bilinear pairings, and repeated encryption operations, resulting in relatively high computational and communication costs. In contrast, DPF-based constructions primarily depend on lightweight pseudorandom generator (PRG) evaluations and therefore offer significantly better scalability.

For the \textit{KeyGen} phase, both $\mathsf{eSpat\mbox{-}B}$ and $\mathsf{eSpat+}$ achieve linear complexity $O(n)T_{PRG}$, matching the efficiency of BonehS\&P~\cite{boneh2021lightweight} while substantially improving upon the quadratic complexity of the standard DPF~\cite{boyle2016function}. For the \textit{Eval} phase, $\mathsf{eSpat\mbox{-}B}$ requires $O(n)T_{PRG}$, whereas $\mathsf{eSpat+}$ further reduces the complexity to $O(l)T_{PRG}$ by leveraging prefix-based evaluation, where $l \leq n$ denotes the query prefix length.

A similar trend can be observed for communication and storage overhead. Both variants of $\mathsf{eSpat}$ require only linear communication during key generation, while $\mathsf{eSpat+}$ achieves constant overhead in the evaluation phase, matching BonehS\&P and outperforming existing spatial query schemes. These improvements stem from the integration of hierarchical spatial partitioning with DPF-based aggregation, which eliminates the need to process large numbers of intermediate query results and enables efficient aggregation directly over encrypted spatial data.

% It can be seen from the comparison results that in terms of key size, $\mathsf{eSpat+}$ has the best key size, followed by $\mathsf{eSpat\mbox{-}B}$, and \cite{boyle2016function} has the largest key size. This is because the key size of \cite{boyle2016function} is significantly larger due to the need to support the complete prefix tree structure. On the other hand, $\mathsf{eSpat+}$ and $\mathsf{eSpat\mbox{-}B}$ have significant advantages in evaluations number. This is because the number of evaluations is related to the height of the tree. Each layer requires PRG expansion for each node. However, $\mathsf{eSpat\mbox{-}B}$ and $\mathsf{eSpat+}$ have fewer layers, they require fewer evaluations and offer higher computational efficiency.

\begin{algorithm}[]
    \SetAlgoLined %显示end
	\caption{$\mathtt{eSpat+}.\mathtt{EvalPre}$}%算法名字
        \label{IDPF:Alg4}
	\KwIn{$b,k_b,pp,r(x,y,z)\in R(x,y,z)$}%输入参数
	\KwOut{$y_b^{l}$}%输出
	$s^{(0)} = k_b$ and $t^{(0)} = b$\; 
        Interpret $pp = CW^{(1)},\ldots,CW^{(n)}$\;
        $st_b^0 \leftarrow s^{(0)} || t^{(0)}$;\\
        \For{$j=1$ to $l$}{
            $(st_b^j,y_b^j) \leftarrow$           
            $\mathtt{eSpat+}.\mathtt{EvalNext}$($b,st_b^{j-1},CW^{(j)},r(x,y,z)$);}
        \Return $y_b^{l}$
\end{algorithm}
% \vspace{-.1in}
\begin{algorithm}[]
    \SetAlgoLined %显示end
    \setlength{\baselineskip}{0.5cm}
	\caption{$\mathtt{eSpat+}.\mathtt{MoveEval}$}%算法名字
        \label{IDPF:Alg5}
	\KwIn{$b,k_b,pp,pp_{old},pp_{new}$,$R(x_0,y_0,z_0)$,$R(x',y',z')$,\\ $R(x'',y'',z'')$}%输入参数
	\KwOut{$y_{b,new}^l$}%输出
	Set $s^{(0)} = k_b$ and $t^{(0)} = b$\; %\;用于换行
        Interpret $pp = CW^{(1)},\ldots,CW^{(m)}$ $pp_{old} = CW_{old}^{(m+1)},\ldots,CW_{old}^{(n)}$ $pp_{new} = CW_{new}^{(m+1)},\ldots,CW_{new}^{(n)}$\;
        $st_b^0 \leftarrow s^{(0)} || t^{(0)}$\;
        \For{$j=1$ to $m$}{
            $(st_b^j,y_b^j) \leftarrow$ $\mathtt{eSpat+}.\mathtt{EvalNext}$($b,st_b^{j-1},CW^{(j)},R(x_0,y_0,z_0)$);
        }
        $st_{b,old}^0=st_{b,new}^0=st_b^m$\;
        \For{$ l = m+1 $ to $n$}{
            $(st_{b,old}^l,y_{b,old}^l) \leftarrow \mathtt{eSpat+}.\mathtt{EvalNext}(b,st_{b,old}^{l-1},CW^{(l)}_{old},R(x',y',z')$);
            $(st_{b,new}^l,y_{b,new}^l) \leftarrow$ $\mathtt{eSpat+}.\mathtt{EvalNext}$($b,st_{b,new}^{l-1},CW^{(l)}_{new}$,$R(x'',y'',z'')$);
$y_{b,new}^l=y_{b,old}^{l}+y_{b,new}^{l}$\;
            \Return $y_{b,new}^l$      
        }      
\end{algorithm}

\section{Experimental Evaluation}\label{sec:6}
In this section, we present the detailed experimental setups. Then, we offer an overview of the experimental results.

\subsection{Experimental Settings}

\textbf{Configuration}. The $\mathsf{eSpat\mbox{-}B}$ and $\mathsf{eSpat+}$ are employed to tackle distribution statistical issues in certain applications. Similar to Ref.~\cite{boneh2021lightweight}, the security parameter is set to $128$ bits. The test platform's configuration includes an AMD Ryzen 9 7945HX processor, an NVIDIA GeForce GTX 4070 graphics card, and 64 GB of system memory, with Java version 11.0.4. 

% As for baselines, we select LiangTC~searchable encryption~\cite{liang2023privacy}, LiangTIFS~range query~\cite{liang2024efficient}, and BonehS\&P~private aggregation~\cite{boneh2021lightweight}. The above schemes are used to compare the computation and communication overhead. In addition, we use the following two baseline schemes to compare the accuracy: LDP1~\cite{hong2022collecting} and LDP2~\cite{cunningham2021real}. Note that these methods are not originally designed for spatial data, but we extend them to support the spatial distribution statistics analysis. 

\textbf{Baselines.} We compare our schemes with representative approaches from four categories of privacy-preserving data analytics. Specifically, LiangTC~\cite{liang2023privacy} represents searchable encryption, LiangTIFS~\cite{liang2024efficient} represents privacy-preserving range query, and BonehS\&P~\cite{boneh2021lightweight} represents function-secret-sharing-based private aggregation. In addition, we include two local differential privacy schemes, denoted as LDP1~\cite{hong2022collecting} and LDP2~\cite{cunningham2021real}. These baselines cover the major technical paradigms related to privacy-preserving spatial statistics, providing a comprehensive performance comparison.

\textbf{Dataset.} We evaluate the proposed schemes using two real-world trajectory datasets: Geolife and T-Drive. Geolife contains 17,621 trajectories collected from 182 users, with each record consisting of time-stamped latitude, longitude, and altitude information. T-Drive contains approximately 15 million GPS points collected from 10,357 taxis over one week. Following our spatial encoding method, the trajectory points are transformed into spatial indexes and securely uploaded to the servers for statistical analysis.

\begin{table*}[]
\centering
\caption{Complexity between $\mathsf{eSpat}$ and other baselines. Note that the complexities listed in the table omit some operations with very low time and storage overhead, which can be neglected.}
\label{tab:complex}
\renewcommand{\arraystretch}{1.3}
\begin{tabular}{ccccccc}
\hline
\multicolumn{7}{c}{Computation}                                                                                                                     \\ \hline
\multicolumn{1}{c|}{}        & LiangTC~\cite{liang2023privacy}           & LiangTIFS~\cite{liang2024efficient}                   &  DPF~\cite{boyle2016function}           & BonehS\&P~\cite{boneh2021lightweight}     & $\mathsf{eSpat}\mbox{-}B$         & $\mathsf{eSpat+}$        \\\hline
\multicolumn{1}{c|}{$KeyGen$}  & $O(|B|+N)T_{EXP}$  & $O(Nn+M)T_{PRF}+O(M)T_{AES}$ & $O(n^2)T_{PRG}$ & $O(n)T_{PRG}$ & $O(n)T_{PRG}$ & $O(n)T_{PRG}$ \\
\multicolumn{1}{c|}{$Eval$} & $O(MN)T_{pair}$   & $O(logN)T_{AES}$             & $O(n)T_{PRG}$   & $O(l)T_{PRG}$ & $O(n)T_{PRG}$   & $O(l)T_{PRG}$ \\ \hline
\multicolumn{7}{c}{Storage}                                                                                                  \\ \hline
\multicolumn{1}{c|}{}        & LiangTC~\cite{liang2023privacy}            & LiangTIFS~\cite{liang2024efficient}                     & DPF~\cite{boyle2016function}            & BonehS\&P~\cite{boneh2021lightweight}     & $\mathsf{eSpat}\mbox{-}B$          & $\mathsf{eSpat+}$        \\ \hline
\multicolumn{1}{c|}{$KeyGen$}  & $O(|B|+N)\lambda$ & $O(Nn+M)\lambda$             & $O(n^2)\lambda$ & $O(n)\lambda$ & $O(n)\lambda$ & $O(n)\lambda$ \\
\multicolumn{1}{c|}{$Eval$} & $O(|B|+M\log N)\lambda$ & $O(log N)\lambda$            & $O(n)\lambda$   & $O(1)\lambda$ & $O(n)\lambda$   & $O(1)\lambda$ \\ \hline
\end{tabular}
\end{table*}

\begin{figure*}[]
  \centering % subfloat之间不能有空行，否则会被识别为换行。
  % \vspace{-.2in}
    \subfloat[][\normalsize{KeyGen time w.r.t string length.}]{
	\begin{minipage}{.3\linewidth}
    \includegraphics[width=\linewidth]{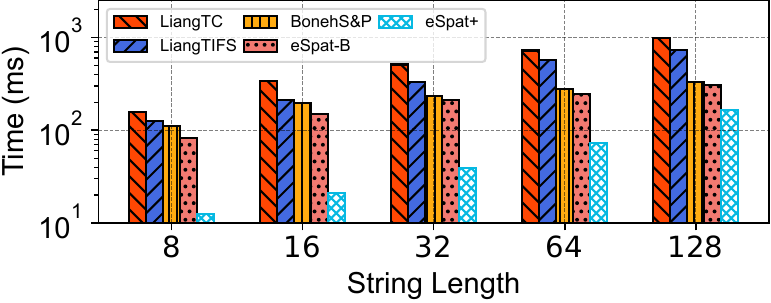}
    \vspace{-.2in}
    \label{fig:keygen-spatial}
    \end{minipage}
    }
    \vspace{-.15in}
    \subfloat[][\normalsize{KeyGen time w.r.t data records.}]{
	\begin{minipage}{.3\linewidth}
\includegraphics[width=\linewidth]{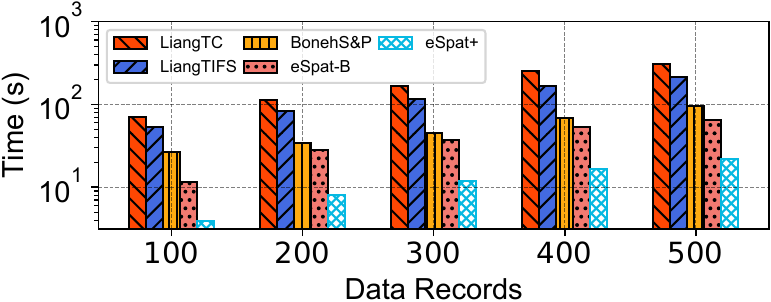}
    \vspace{-.2in}
    \label{fig:keygen-UAV}
    \end{minipage}
    } 
    \subfloat[][\normalsize{KeyEval time w.r.t string length.}]{
	\begin{minipage}{.3\linewidth}
    \includegraphics[width=\linewidth]{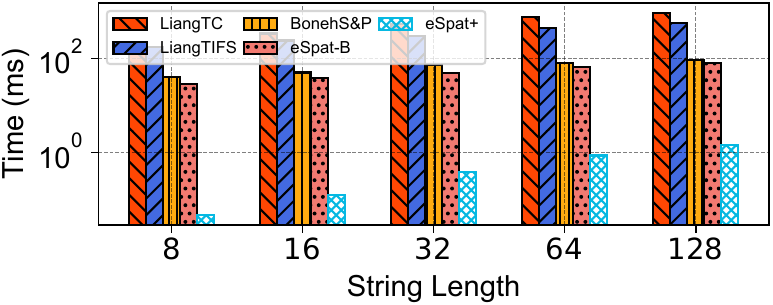}
    \vspace{-.2in}
    \label{fig:keyeval-spatial}
    \end{minipage}
    }\\
    \subfloat[][\normalsize{KeyEval time w.r.t data records.}]{
	\begin{minipage}{.3\linewidth}
    \includegraphics[width=\linewidth]{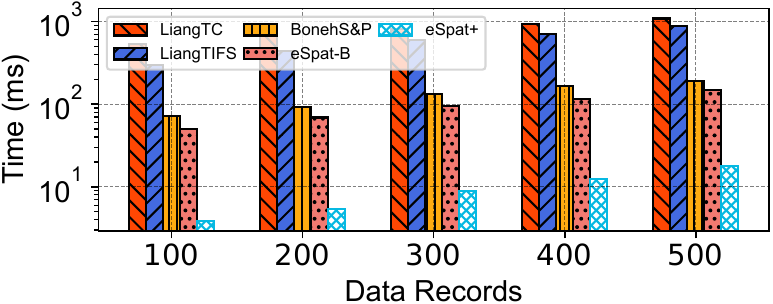}
    \vspace{-.2in}
    \label{fig:keyeval-UAV}
    \end{minipage}
    }
    \vspace{-.15in}
    \subfloat[][\normalsize{Update time w.r.t string length.}]{
	\begin{minipage}{.3\linewidth}
    \includegraphics[width=\linewidth]{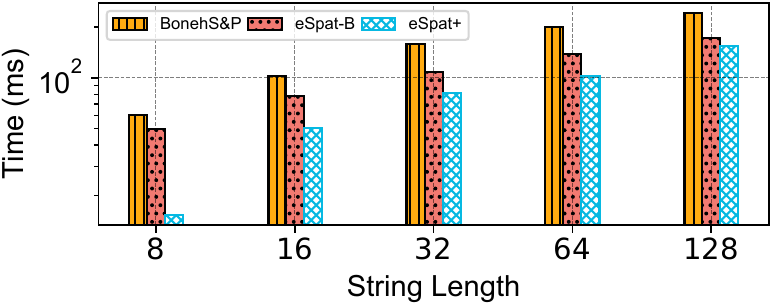}
    \vspace{-.2in}
    \label{fig:update-spatial}
    \end{minipage}
    } 
    \subfloat[][\normalsize{Update time w.r.t data records.}]{
	\begin{minipage}{.3\linewidth}
    \includegraphics[width=\linewidth]{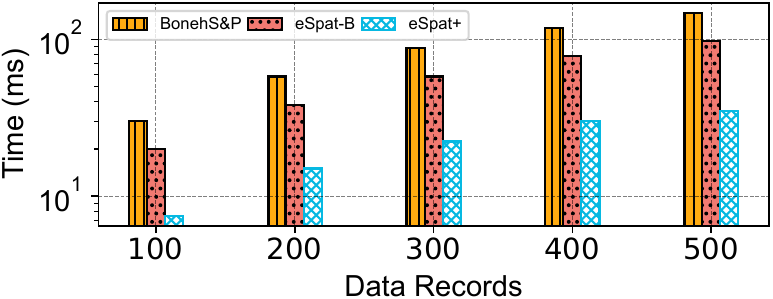}
    \vspace{-.2in}
    \label{fig:update-UAV}
    \end{minipage}
    }\\
    \subfloat[][\normalsize{Statistics time w.r.t string length.}]{
	\begin{minipage}{.3\linewidth}
    \includegraphics[width=\linewidth]{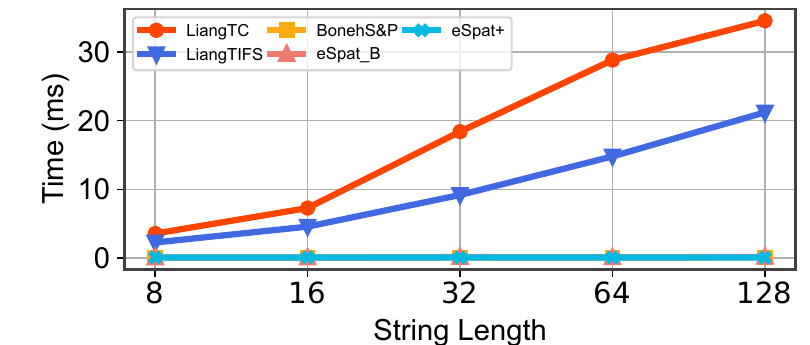}
    \vspace{-.2in}
    \label{fig:statistics-spatial}
    \end{minipage}
    }
    \vspace{-.15in}
    \subfloat[][\normalsize{Statistics time w.r.t data records.}]{
	\begin{minipage}{.3\linewidth}
    \includegraphics[width=\linewidth]{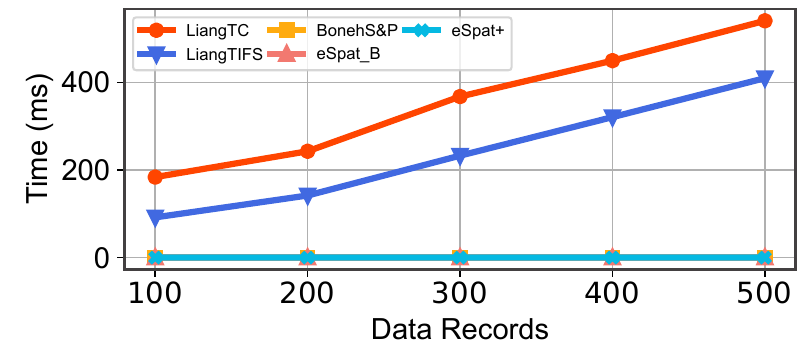}
    \vspace{-.2in}
    \label{fig:statistics-UAV}
    \end{minipage}
    }
    \subfloat[][\normalsize{Total time w.r.t string length.}]{
	\begin{minipage}{.3\linewidth}
    \includegraphics[width=\linewidth,height=2.3cm]{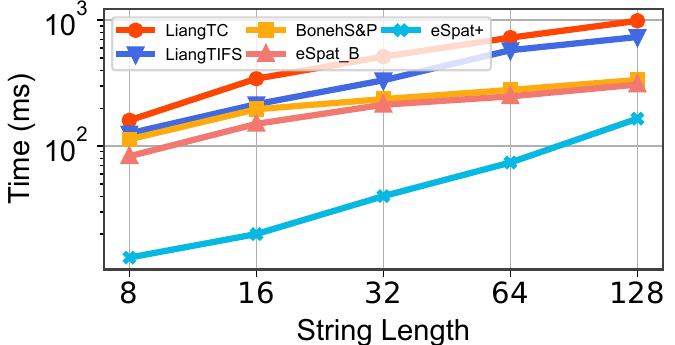}
    \vspace{-.2in}
    \label{fig:toatl_spatail}
    \end{minipage}
    }\\
    \subfloat[][\normalsize{Total time w.r.t data records.}]{
	\begin{minipage}{.3\linewidth}
    \includegraphics[width=\linewidth,height=2.3cm]{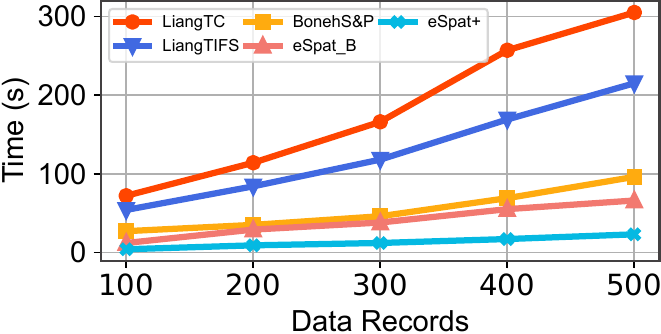}
    \vspace{-.2in}
    \label{fig:total_UAV}
    \end{minipage}
    }
    % \vspace{-0.15in}
    \subfloat[][\normalsize{Communication cost of clients.}]{
	\begin{minipage}{.3\linewidth}
    \includegraphics[width=\linewidth]{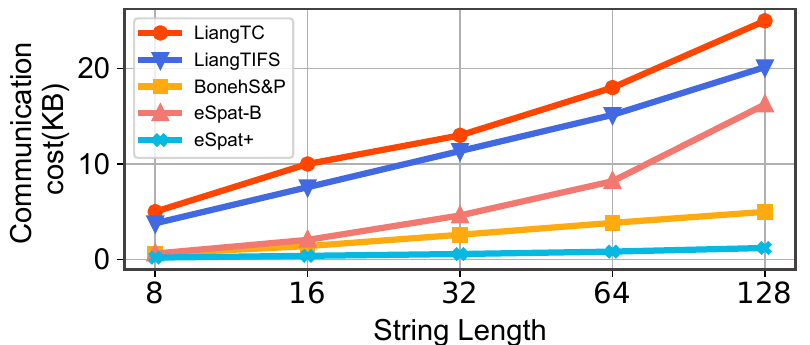}
    \vspace{-.2in}
    \label{fig:comm-clients}
    \end{minipage}
    }
    \subfloat[][\normalsize{Communication cost of servers.}]{
	\begin{minipage}{.3\linewidth}
    \includegraphics[width=\linewidth]{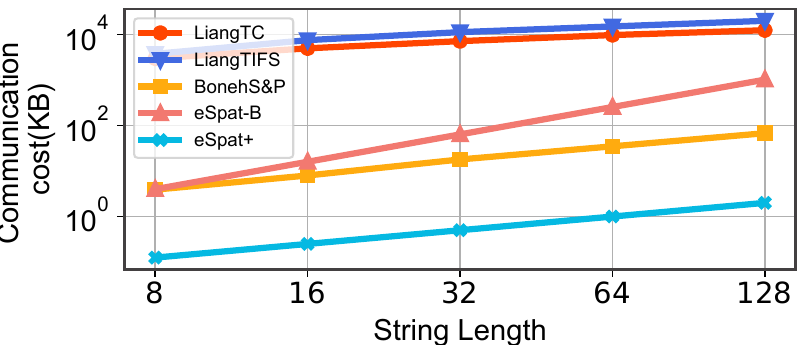}
    \vspace{-.2in}
    \label{fig:communication}
    \end{minipage}
    }
    \caption{Performance evaluation of $\mathsf{eSpat}\mbox{-}B$ and $\mathsf{eSpat+}$ against baseline schemes. Results demonstrate that $\mathsf{eSpat+}$ achieves the best efficiency in terms of computation time and communication cost for both clients and servers. For example, $\mathsf{eSpat+}$  reduces key generation time by over 60\% and client communication by over 75\% compared to strong baselines at 128-bit length.}
    % \vspace{-0.15in}
\end{figure*}

\textbf{{Metric}}. We measure the following two indicators: (1) Time costs: we evaluate the time costs of key generation, key evaluation, and update; (2) Communication costs: we evaluate the communication overhead of clients and server; (3) Accuracy: we compare the accuracy of different DP schemes.

\subsection{Experimental Results}

\textbf{Time costs for key generation.} As shown in Figs. \ref{fig:keygen-spatial} and \ref{fig:keygen-UAV}, key generation time increases with encoding string length and data records. BonehS\&P incurs higher costs due to tree height extension, LiangTIFS suffers from inefficient key construction, and LiangTC faces increased overhead as the data scale grows. In contrast, $\mathsf{eSpat\mbox{-}B}$ uses optimized DPF and octree partitioning, while $\mathsf{eSpat+}$ improves scalability with KD-tree encoding and incremental DPF. For example, at $128$ bits, $\mathsf{eSpat+}$ and $\mathsf{eSpat\mbox{-}B}$ take $160$ ms and $300$ ms, compared to $424$ ms for BonehS\&P, $987$ ms for LiangTC, and $730$ ms for LiangTIFS. Similarly, with $500$ data records, $\mathsf{eSpat+}$ and $\mathsf{eSpat\mbox{-}B}$ take $20$ and $65$ s, much less than $95$ s for BonehS\&P, $300$ s for LiangTC, and over $210$ s for LiangTIFS.

\textbf{Time costs for key evaluation.} As shown in Figs. \ref{fig:keyeval-spatial} and \ref{fig:keyeval-UAV}, key evaluation time increases with encoding string length and data records. BonehS\&P incurs higher costs due to tree traversal, LiangTIFS suffers from inefficiencies in decryption and data reconstruction operations, and LiangTC scales linearly with evaluation points. In contrast, $\mathsf{eSpat\mbox{-}B}$ reduces time using the improved DPF, while $\mathsf{eSpat+}$ further improves efficiency with KD-tree partitioning and incremental DPF. For example, at $128$ bits, $\mathsf{eSpat\mbox{-}B}$ and $\mathsf{eSpat+}$ take $80$ ms and $2$ ms, respectively, compared to over BonehS\&P ($90$ ms), LiangTC ($900$ ms), and LiangTIFS ($580$ ms). Similarly, for $500$ data records, $\mathsf{eSpat+}$ and $\mathsf{eSpat\mbox{-}B}$ take $18$ ms and $150$ ms, much lower than $190$ ms for BonehS\&P, $1000$ ms for LiangTC, and over $870$ ms for LiangTIFS.

\textbf{Time costs for update.} As shown in Figs. \ref{fig:update-spatial} and \ref{fig:update-UAV}, we evaluate the key update time with encoding string length and data records. LiangTC and LiangTIFS are excluded as they do not handle key updates. As string length and data records increase, time costs rise, but $\mathsf{eSpat\mbox{-}B}$ and $\mathsf{eSpat+}$ outperform BonehS\&P. BonehS\&P incurs higher costs due to full key regeneration during updates, while $\mathsf{eSpat\mbox{-}B}$ and $\mathsf{eSpat+}$ improve efficiency using optimized and incremental DPF. For example, at $128$ bits, $\mathsf{eSpat+}$ and $\mathsf{eSpat\mbox{-}B}$ take $150$ ms and $170$ ms, compared to over $250$ ms for BonehS\&P. For $500$ data records, $\mathsf{eSpat+}$ takes $35$ ms, much less than $100$ ms for $\mathsf{eSpat\mbox{-}B}$ and $150$ ms for BonehS\&P.

\begin{figure*}[]
  \centering % subfloat之间不能有空行，否则会被识别为换行。
  % \vspace{-.2in}
    \subfloat[][\normalsize{Accuracy on Geolife w.r.t string length.}]{
	\begin{minipage}{.4\linewidth}
    \includegraphics[width=\linewidth]{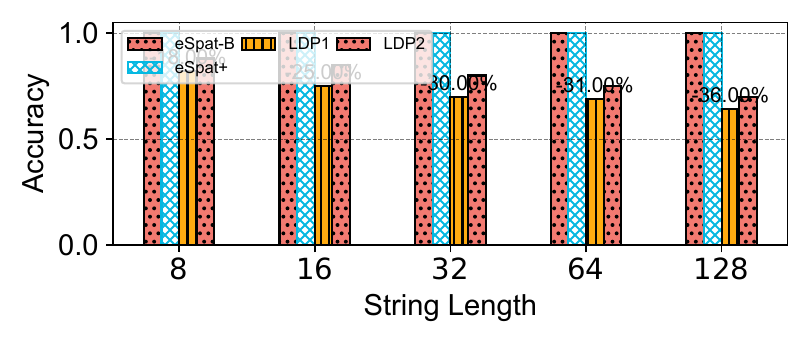}
    \vspace{-.2in}
    \label{fig:acc_string_1}
    \end{minipage}
    }
    \vspace{-0.15in}
    \subfloat[][\normalsize{Accuracy on Geolife w.r.t data records.}]{
	\begin{minipage}{.4\linewidth}
    \includegraphics[width=\linewidth,height=0.4\linewidth]{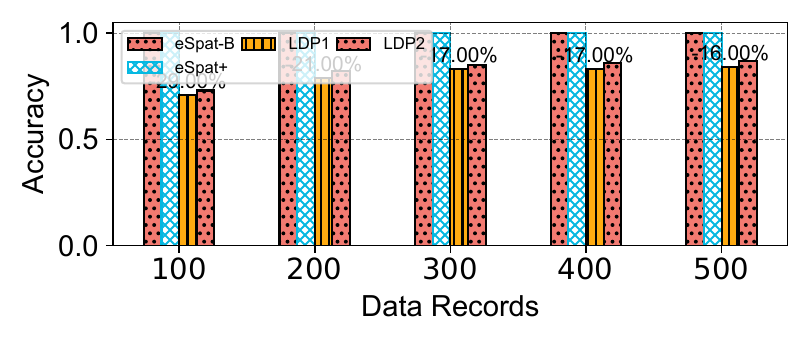}
    \vspace{-.2in}
    \label{fig:acc_UAV_1}
    \end{minipage}
    }\\ 
    \subfloat[][\normalsize{Accuracy on T-Drive w.r.t string length.}]{
	\begin{minipage}{.4\linewidth}
    \includegraphics[width=\linewidth]{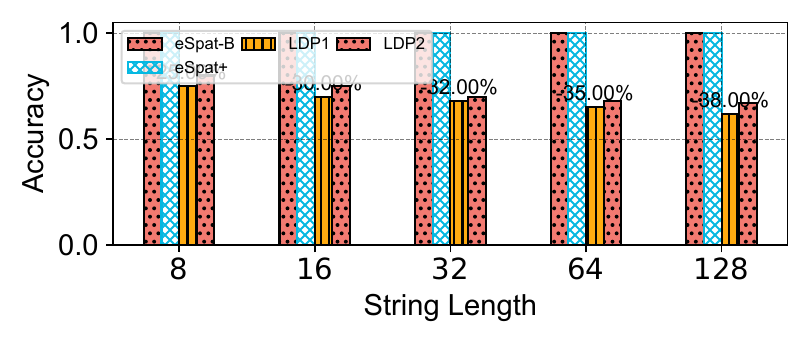}
    \vspace{-.2in}
    \label{fig:acc_string_2}
    \end{minipage}
    }
    \subfloat[][\normalsize{Accuracy on T-Drive w.r.t data records.}]{
	\begin{minipage}{.4\linewidth}
    \includegraphics[width=\linewidth]{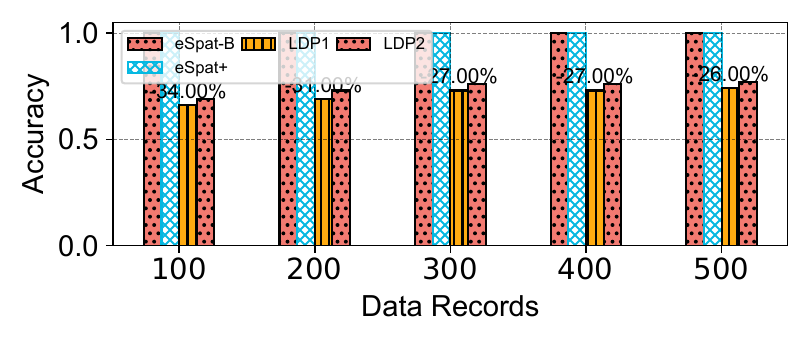}
    \vspace{-.2in}
    \label{fig:acc_UAV_2}
    \end{minipage}
    }
    \caption{Accuracy comparison of the proposed eSpat schemes against local differential privacy baselines (LDP1, LDP2) on real-world trajectory datasets. The results demonstrate that both eSpat-B and eSpat+ consistently achieve 100\% accuracy across different encoding string lengths and data volumes, as they perform exact computations without noise.}
    \label{fig:acc}
    % \vspace{0.2in}
\end{figure*}

\textcolor{black}{\textbf{Time costs for statistics.} As shown in Figs. 5g and 5h, we evaluate the statistics time, which refers to the requester-side aggregation time. As these numbers increase, time costs rise. LiangTC and LiangTIFS incur higher costs due to the need to match all indices. In contrast,  BonehS\&P only needs to add up the shares received from the server, and the same applies to $\mathsf{eSpat+}$ and $\mathsf{eSpat\mbox{-}B}$. For instance, at $128$ bits, BonehS\&P, $\mathsf{eSpat+}$ and $\mathsf{eSpat\mbox{-}B}$ take $0.1$ ms, compared to $540$ ms for LiangTC, and $410$ ms for LiangTIFS. For $500$ data records, $\mathsf{eSpat+}$ and $\mathsf{eSpat\mbox{-}B}$ also take $0.1$ ms, significantly lower than LiangTC and LiangTIFS.}

\textbf{Total time.} As shown in Figs. \ref{fig:toatl_spatail} and \ref{fig:total_UAV}, we evaluate the total time with the encoding string and data records. The reasons for the differences are discussed in earlier experiments. Specifically, at $128$ bits, $\mathsf{eSpat+}$ and $\mathsf{eSpat\mbox{-}B}$ take $160$ and $300$ ms, compared to BonehS\&P ($424$ ms), LiangTC ($987$ ms), and LiangTIFS ($730$ ms). Similarly, with $500$ data records, $\mathsf{eSpat+}$ and $\mathsf{eSpat\mbox{-}B}$ take $20$ and $65$ s, much less than  BonehS\&P ($95$ s), LiangTC ($300$ s), and LiangTIFS ($210$ s).

\textbf{Communication overhead.} As shown in Figs. \ref{fig:comm-clients} and \ref{fig:communication}, we evaluate communication overhead with different encoding string length. The communication costs of $\mathsf{eSpat\mbox{-}B}$, $\mathsf{eSpat+}$, and the baselines increase with string length. $\mathsf{eSpat+}$ achieves the lowest communication cost due to incremental DPF with KD-tree partitioning. In contrast, $\mathsf{eSpat\mbox{-}B}$ incurs quadratic costs due to linear-sized keys at each prefix tree level, while BonehS\&P, leveraging incremental DPF, incurs higher costs due to greater tree height, though still lower than $\mathsf{eSpat\mbox{-}B}$. LiangTC and LiangTIFS need to generate binary trees and keys for the servers, leading to the highest communication costs. For instance, at $128$ bits, $\mathsf{eSpat+}$ incurs $1.2$ kilobyte (KB) for client communication and $7$ KB for server communication, significantly lower than $5$ KB and $70$ KB for BonehS\&P, $16$ KB and $1000$ KB for $\mathsf{eSpat\mbox{-}B}$, and over $20$ KB and $20000$ KB for LiangTC and LiangTIFS.

\textcolor{black}{\textbf{Accuracy.} As shown in Fig.~\ref{fig:acc}, we compare the accuracy under different encoding string lengths and numbers of data records. LDP1 and LDP2 denote representative local differential privacy schemes. The results show that $\mathsf{eSpat\mbox{-}B}$ and $\mathsf{eSpat+}$ consistently achieve 100\% accuracy, whereas LDP1 and LDP2 exhibit lower accuracy due to the perturbation noise introduced for privacy protection. The proposed schemes perform exact cryptographic computation: each spatial point is encoded into a spatial index, the two servers evaluate DPF/IDPF shares, and the requester reconstructs the exact counting result without introducing noise or approximation. In contrast, the accuracy of LDP-based approaches decreases as the spatial partition becomes finer, since perturbation may move data points across region boundaries. As the number of data records increases, the impact of noise is averaged out, leading to gradually improved accuracy. }

\section{Conclusion}\label{sec:8}
In this paper, we have proposed efficient and privacy-preserving distribution statistics analysis schemes for spatial data, which achieve accurate statistics while preserving spatial data privacy. First, we have designed $\mathsf{eSpat\mbox{-}B}$, which combines distributed point functions and octree partitioning to guarantee both privacy and efficiency. Building upon this, we have introduced $\mathsf{eSpat+}$, which further reduces computational and communication overheads by leveraging KD-tree and incremental DPF. In addition, we have also designed an efficient update algorithm to process the frequent updates of spatial data. Security analysis and experimental results have validated the robustness and high performance of $\mathsf{eSpat\mbox{-}B}$ and $\mathsf{eSpat+}$.

%{\appendices
%\section*{Proof of the First Zonklar Equation}
%Appendix one text goes here.
% You can choose not to have a title for an appendix if you want by leaving the argument blank
%\section*{Proof of the Second Zonklar Equation}
%Appendix two text goes here.}

\bibliographystyle{IEEEtran}
\bibliography{ref.bib}

\newpage

\vfill

\end{document}